 \journalname{AAAA}
\begin{document}

\title{New constructions of MDS self-dual and self-orthogonal codes via GRS codes
}


\author{Ziteng Huang         \and Weijun Fang \and
        Fang-Wei Fu 
}


\institute{
              Ziteng Huang\at
              Chern Institute of Mathematics and LPMC, Nankai University, Tianjin 300071, China\\
              \email{hzteng@mail.nankai.edu.cn}
              \and
              Weijun Fang (Corresponding Author)\at
               Shenzhen International Graduate School, Tsinghua University,\\
               and PCL Research Center of Networks and Communications, Peng Cheng Laboratory,\\
               Shenzhen 518055, P. R. China\\
              \email{nankaifwj@163.com}
              \and
              Fang-Wei Fu\at
              Chern Institute of Mathematics and LPMC, and Tianjin Key Laboratory of Network and Data Security Technology, Nankai University,\\
              Tianjin, 300071, P. R. China\\
              \email{fwfu@nankai.edu.cn}
}

\date{Received: date / Accepted: date}

\maketitle

\begin{abstract}
Recently, the construction of new MDS Euclidean self-dual codes has been widely investigated. In this paper, for square $q$, we utilize generalized Reed-Solomon (GRS) codes and their extended codes to provide four generic families of $q$-ary MDS Euclidean self-dual codes of lengths in the form $s\frac{q-1}{a}+t\frac{q-1}{b}$, where $s$ and $t$ range in some interval and $a, b \,|\, (q -1)$. In particular, for large square $q$, our constructions   take up a proportion of  generally more than 34\% in all the possible lengths of $q$-ary MDS Euclidean self-dual codes, which is larger than the previous results.  Moreover, two new families of MDS Euclidean self-orthogonal codes and two new families of MDS Euclidean almost self-dual codes are given similarly.
\keywords{MDS self-dual codes\and MDS self-orthogonal codes\and generalized Reed-Solomon codes}
 \subclass{ 94B05 \and 81Q99 }
\end{abstract}

\section{Introduction}
\label{intro}
Let $q$ be a prime power and $\mathbb{F}_{q}$ be the finite field with $q$ elements. A $q$-ary $[n,k,d]$-linear code is defined as a $k$-dimensional subspace of $\mathbb{F}_{q}^{n}$ with minimum Hamming distance $d$. Any $[n, k, d]$-linear code has to satisfy the \emph{Singleton bound}, i.e.,
$$d \leq n - k + 1.$$
An $[n, k, d]$-linear code is called a \emph{maximum distance separable} (MDS) code if it achieves the Singleton bound with equality.

For  any two vectors $\bm{x}=(x_{1},x_{2},\ldots,x_{n}) \in \mathbb{F}_{q}^{n}$ and $\bm{y}=(y_{1},y_{2},\ldots,y_{n}) \in \mathbb{F}_{q}^{n}$, their Euclidean inner product is defined as
$$\langle \bm{x},\bm{y}\rangle =\sum^{n}_{i=1} x_{i}y_{i}.$$
The Euclidean dual code $C^{\bot}$ of $C$ then is given as
$$ C^{\bot} = \{ \bm{x} \in \mathbb{F}_{q}^{n} : \langle \bm{x},\bm{y}\rangle = 0 , \textnormal{ for  any } \bm{y} \in C \} . $$
If  $C = C^{\bot}$ ($C \subseteq C^{\bot}$), then $C$ is called a Euclidean \emph{self-dual} (\emph{self-orthogonal}) code. Obviously, the length of a Euclidean self-dual code is even, therefore, there do not exist Euclidean self-dual codes of odd length. The extreme case is Euclidean \emph{almost self-dual} code, i.e., $C$ is called a Euclidean almost self-dual code if  $C \subseteq C^{\bot}$ and $\dim(C^{\bot})=\dim(C)+1$. There are few works on construction of Euclidean almost self-dual codes and the reader may refer to \cite{16}.

In this paper, we only consider the case of Euclidean inner product, hence, we omit ``Euclidean" for convenience in the rest of this paper.

Due to the nice structures of MDS codes and self-dual codes, they have widespread applications in theoretics and practice. Specifically, on the one hand, MDS codes are related to the orthogonal arrays in combinatorial design and the $n$-arcs in finite geometry.  MDS codes are  also widely applied in distributed storage systems \cite{1}. On the other hand, self-dual codes have various applications in linear secret sharing schemes \cite {2,3} and unimodular integer lattices \cite {4,5}. Naturally, as a more special class of codes, MDS self-dual codes have potential applications in coding theory, cryptography and combinatorics \cite{22,23,24}. Moreover, MDS self-orthogonal codes have been employed to construct quantum MDS codes \cite{6,7} and the reader may refer to \cite{16,19,20,21} for the constructions of self-orthogonal codes.

The parameters of a $q$-ary MDS self-dual code are completely determined by its length. Therefore, it is sufficient to consider the problem for which length a $q$-ary MDS self-dual code exists. Recently, this problem has been extensively  concerned. For even $q$, Grassl and Gulliver \cite{8} proved that there exists a $q$-ary MDS self-dual code of even length $n$ for all $n \leq q$. In \cite{9}, Jin and Xing first proposed a systematic method to construct MDS self-dual codes via GRS codes. Yan \cite{10}, Fang and Fu \cite{11} generalized this method to extended GRS codes. Zhang and Feng \cite{12} proposed a unified approach to construct MDS self-dual codes via GRS codes. Until now, most of MDS self-dual codes constructed from (extended) GRS codes are obtained by  choosing suitable evaluation sets. In \cite{13,14}, the authors took multiplicative subgroups as the evaluation sets to construct some new families of self-dual GRS codes. In \cite{15}, the authors considered the evaluation sets as a subgroup of finite fields and its cosets in a bigger subgroup. Fang et al.\cite{16} took  the union
of two disjoint multiplicative subgroups and their cosets as the evaluation sets. In \cite{17}, the authors considered the evaluation sets as  two multiplicative subgroups and their cosets which have nonempty intersection. Although a lot of works have been done, the construction of MDS self-dual codes for all possible lengths is still an unsolved problem. Therefore, it is desirable to find the construction which can take up a large proportion in all the possible lengths of  $q$-ary MDS self-dual codes.

In this paper, inspired by the idea of \cite{16}, we give four new families of MDS self-dual codes with flexible parameters by generalizing the evaluation sets in \cite{16} to the generic forms. The lengths of codes produced by our constructions can be expressed as a linear combination of two divisors of $q- 1$.  For large square $q$, our constructions take up a proportion of generally  more than 34\% in all the possible lengths of $q$-ary MDS  self-dual codes, which is larger than the previous results. Moreover, two new families of MDS self-orthogonal codes and MDS almost self-dual codes are given similarly.

In the following, we list our main results. For $q = r^{2}$, suppose that $a$ and $b$ are even divisors of $q-1$ satisfying $2a|b(r+1)$ and $2b|a(r-1)$. Let $n=s\frac{q-1}{a}+t\frac{q-1}{b}$, where $1 \leq s \leq \frac{a}{\gcd(a,b)}$ and $1 \leq t \leq \frac{b}{\gcd(a,b)}$:\\
\textbf{(1)} There exists a $q$-ary MDS self-dual code of length $n$ if the parameters satisfy (i) or (ii),
\begin{description}
 \item[\textnormal{(i)}]  $r\equiv 1(mod\;4)$, $a\equiv 2 (mod\;4)$, $s$ is even. (see Theorem 1(i))
  \item[\textnormal{(ii)}] $r\equiv 3(mod\;4)$, $b\equiv 2 (mod\;4)$, $\frac{(r+1)b}{2a}s^{2}$ is odd. (see Theorem 2(i))
\end{description}
\textbf{(2)} There exist a $q$-ary MDS self-dual code of length $n+2$ and a $q$-ary  MDS almost self-dual code of length $n+1$ if the parameters satisfy (iii) or (iv),
\begin{description}
 \item[\textnormal{(iii)}]  $r\equiv1(mod \; 4)$, $a\equiv 2 (mod\;4)$, $s$ is odd. (see Theorems 1(ii), 4(i))
  \item[\textnormal{(iv)}] $r\equiv3(mod \; 4)$, $b\equiv 2 (mod\;4)$,  $\frac{(r+1)b}{2a}s^{2}$ is even. (see Theorems 2(ii), 4(ii))
\end{description}
\textbf{(3)} For $1\leq k\leq \frac{n}{2}-1$, there exists a $q$-ary $[n,k]$-MDS self-orthogonal code if the parameters satisfy (v) or (vi),
\begin{description}
 \item[\textnormal{(v)}] $r\equiv 1(mod\;4)$, $a\equiv 2 (mod\;4)$. (see Theorem 3(i))
  \item[\textnormal{(vi)}] $r\equiv 3(mod\;4)$, $b\equiv 2 (mod\;4)$. (see Theorem 3(ii))
\end{description}

The organization of this paper is presented as follows. In Section 2, we introduce some basic notions and results about (extended) GRS codes and self-dual (self-orthogonal) codes. In Section 3, we give four new families of MDS self-dual codes with flexible parameters. In Section 4, we make a comparison of our results with the previous results and give some examples. In Section 5, we give two new families of MDS self-orthogonal codes and MDS almost self-dual codes. In Section 6, we conclude this paper.
\section{Preliminaries}
\label{sec:1}
In this section, we introduce some notions and results about (extended) GRS codes and self-dual (self-orthogonal) codes.

Let $q$ be a prime power and $\mathbb{F}_{q}$ be the finite field with $q$ elements. Let $A=\{a_{1},a_{2},\ldots,a_{n} \} \subseteq \mathbb{F}_{q}$ be a subset  with $n$ distinct elements and $\bm{v} =(v_{1},v_{2},\ldots,v_{n})$ where $v_{1},v_{2},\ldots,v_{n}$ (not necessarily distinct) are nonzero elements in $\mathbb{F}_{q}$.  Then for $1 \leq k \leq n$, the GRS code of length $n$ and
dimension $k$ associated to $A$ and $\bm{v}$ is defined as
$$ GRS_{k}(A,\bm{v})=\{(v_{1}f(a_{1}),\ldots ,v_{n}f(a_{n})) :f(x) \in \mathbb{F}_{q}[x], \deg f(x) \leq k-1 \}.$$
The extended GRS code of length $n+1$ and dimension $k$ associated to $A$ and $\bm{v}$ is defined as
$$ GRS_{k}(A\cup\infty,\bm{v})=\{(v_{1}f(a_{1}),\ldots ,v_{n}f(a_{n}),f_{k-1}) :f(x) \in \mathbb{F}_{q}[x], \deg f(x) \leq k-1 \},$$
where $f_{k-1}$ is the coefficient of $x^{k-1}$ in $f(x)$. It is well-known that (extended) GRS codes are MDS codes and so are their dual codes.

GRS code is one of the most commonly used tools to construct MDS self-dual codes and we introduce some related information. Let $A$ be a subset of $\mathbb{F}_{q}$. The polynomial $f_{A}(x)$ is defined as
$$f_{A}(x)= \prod_{a\in A} (x-a).$$
For any element $a \in A$, define
$$\delta_{A}(a)= f_{A}^{\prime}(a)= \prod _{a'\in A,a' \neq a}(a-a'),$$
where $f_{A}^{\prime}(x)$ is the derivative of $f_{A}(x)$.  It is easy to obtain the following lemma.
\begin{lemma}\cite[Lemma 3]{10}
Let $A\subseteq\mathbb{F}_{q}^{\ast}=\mathbb{F}_{q}\setminus \{0\}$ be a multiplicative subgroup  of size $n$, then
\begin{description}
\item[\textnormal{(i)}] $f_{A}(a)=a^{n}-1$, for $a \in \mathbb{F}_{q}^{\ast} $.
\item[\textnormal{(ii)}] $\delta_{A}(a)=na^{-1}$, for $a \in A$.
\end{description}
\end{lemma}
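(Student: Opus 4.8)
The plan is to pin down the subgroup $A$ explicitly as the set of $n$-th roots of unity in $\mathbb{F}_q$, and then to obtain both parts from the polynomial identity $x^n-1=\prod_{a\in A}(x-a)$.

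First I would prove (i). Since $\mathbb{F}_q^{\ast}$ is cyclic of order $q-1$ and $A$ is a subgroup of order $n$, Lagrange's theorem gives $n\mid q-1$; in particular $\gcd(n,q)=1$, so $x^n-1\in\mathbb{F}_q[x]$ is separable. Every $a\in A$ satisfies $a^n=1$, hence $A$ is contained in the root set of $x^n-1$. As this polynomial has degree $n$ and therefore at most $n$ roots, while $|A|=n$, the set $A$ is exactly its root set, and separability forces $x^n-1=\prod_{a\in A}(x-a)=f_A(x)$ as an identity of monic polynomials in $\mathbb{F}_q[x]$. Evaluating at an arbitrary $a\in\mathbb{F}_q^{\ast}$ yields $f_A(a)=a^n-1$.

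Next I would prove (ii) by differentiating. From $f_A(x)=x^n-1$ we get $f_A'(x)=nx^{n-1}$, so $\delta_A(a)=f_A'(a)=na^{n-1}$ for every $a\in\mathbb{F}_q^{\ast}$; restricting to $a\in A$, where $a^n=1$, gives $a^{n-1}=a^{-1}$ and hence $\delta_A(a)=na^{-1}$. The only point requiring a moment's care is the passage from an equality of functions to an equality of polynomials in (i), which is exactly where the divisibility $n\mid q-1$ (equivalently $\gcd(n,q)=1$, hence separability) is used; beyond that I anticipate no real obstacle, the statement being essentially a direct consequence of the structure of finite cyclic groups.
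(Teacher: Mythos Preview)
Your argument is correct and entirely standard. Note that the paper does not actually supply a proof of this lemma: it simply states the result, remarks that ``It is easy to obtain the following lemma,'' and cites \cite[Lemma~3]{10}. Your proof is precisely the natural one---identify $A$ with the set of $n$-th roots of unity via Lagrange's theorem, deduce the polynomial identity $f_A(x)=x^n-1$, and differentiate. One minor remark: you do not really need separability to pass from ``$A$ is the root set of $x^n-1$'' to the factorization; since each $a\in A$ is a root, $f_A(x)=\prod_{a\in A}(x-a)$ divides $x^n-1$, and both being monic of degree $n$ forces equality. Your invocation of separability is not wrong, just slightly more than is required.
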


Let $QR_{q}$ be the set of nonzero squares of $\mathbb{F}_{q}$ and $\eta$ be the quadratic character of $\mathbb{F}_{q}^{\ast}$, i.e.,  $\eta(x)=1$ if $x\in QR_{q}$ and $\eta(x)=-1$ if $x \notin QR_{q}$. The following two lemmas (or with some equivalent forms) ensure the existence of MDS  self-dual codes. The reader may refer to \cite{9,10,11,12,13,14,15,16,17} for more details.
\begin{lemma}\cite[Lemma 2]{17}
Suppose $n$ is even. Let $A=\{a_{1},\ldots,a_{n}\}$ be a subset of $\mathbb{F}_{q}$, such that for $1 \leq i\leq n$, $\eta(\delta_{A}(a_{i}))$ are the same. Then there exists a vector $\bm{v}=(v_{1},\ldots,v_{n}) \in (\mathbb{F}^{\ast}_{q} )^{n}$ where $v^{2}_{i}=\lambda\delta_{A}(a_{i})^{-1}$ for  $1 \leq i\leq n$ and $\lambda \in\mathbb{F}_{q}^{\ast}$,
such that $GRS_{\frac{n}{2}} (A, \bm{v})$ is self-dual. Consequently, there exists a $q$-ary MDS self-dual code of length $n$.
\end{lemma}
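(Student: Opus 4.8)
The plan is to verify directly that $GRS_{n/2}(A,\bm v)$ is self-orthogonal; since its dimension $n/2$ equals $n-\tfrac n2$, self-orthogonality already forces self-duality, and as a GRS code it is MDS, so the final sentence of the statement follows at once (the code has parameters $[n,\tfrac n2,\tfrac n2+1]$).

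First I would settle the existence of the vector $\bm v$, which is precisely where the hypothesis enters. If the common value of $\eta(\delta_A(a_i))$ is $1$, take $\lambda=1$; if it is $-1$, take $\lambda$ to be any fixed element of $\mathbb{F}_q^\ast\setminus QR_q$. In both cases $\lambda\,\delta_A(a_i)^{-1}\in QR_q$ for every $i$, so one can choose $v_i\in\mathbb{F}_q^\ast$ with $v_i^2=\lambda\,\delta_A(a_i)^{-1}$. The content of the hypothesis is exactly that one single choice of $\lambda$ must simultaneously turn every $\delta_A(a_i)^{-1}$ into a square.

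Next comes the main computation. Two arbitrary codewords of $GRS_{n/2}(A,\bm v)$ are $\bm c_f=(v_if(a_i))_{i=1}^n$ and $\bm c_g=(v_ig(a_i))_{i=1}^n$ with $\deg f,\deg g\le \tfrac n2-1$, so
\[
\langle \bm c_f,\bm c_g\rangle=\sum_{i=1}^n v_i^2 f(a_i)g(a_i)=\lambda\sum_{i=1}^n\frac{f(a_i)g(a_i)}{\delta_A(a_i)}.
\]
Since $\deg(fg)\le n-2$, write $f(x)g(x)=\sum_{m=0}^{n-2}c_mx^m$; then it suffices to establish the Lagrange-type identity
\[
\sum_{i=1}^n\frac{a_i^m}{\delta_A(a_i)}=0,\qquad 0\le m\le n-2.
\]
I would prove this by Lagrange interpolation on the distinct nodes $a_1,\dots,a_n$: for $m\le n-1$ one has $x^m=\sum_{i=1}^n\frac{a_i^m}{\delta_A(a_i)}\prod_{j\ne i}(x-a_j)$, and equating the coefficient of $x^{n-1}$ on both sides yields $0$ precisely when $m\le n-2$. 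Substituting back shows $\langle \bm c_f,\bm c_g\rangle=0$ for all admissible $f,g$, i.e. $GRS_{n/2}(A,\bm v)\subseteq GRS_{n/2}(A,\bm v)^\perp$.

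Finally, comparing dimensions ($\dim GRS_{n/2}(A,\bm v)=\tfrac n2=n-\tfrac n2=\dim GRS_{n/2}(A,\bm v)^\perp$) upgrades the inclusion to equality, so the code is MDS self-dual of length $n$. I do not anticipate a genuine obstacle: the only delicate points are the quadratic-residue bookkeeping in the choice of $\lambda$, which the hypothesis is tailored to handle, and matching the degree bound $\deg(fg)\le n-2$ with the exact range of validity of the interpolation identity — and the choice $k=n/2$ is exactly what makes these two match.
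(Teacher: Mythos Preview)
The paper does not give its own proof of this lemma; it is quoted verbatim from \cite{17} and used as a black box. So there is no argument in the present paper to compare against.

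That said, your proof is correct and is exactly the standard one underlying this circle of results. The choice of $\lambda$ is precisely as you describe, the reduction of $\langle \bm c_f,\bm c_g\rangle$ to the power sums $\sum_i a_i^m/\delta_A(a_i)$ is valid since $\deg(fg)\le n-2$, and the vanishing of those sums for $0\le m\le n-2$ is the classical Lagrange-interpolation identity (equivalently, it is the statement that the dual of $GRS_k(A,\bm 1)$ is $GRS_{n-k}(A,(\delta_A(a_i)^{-1})_i)$). The final dimension count then upgrades self-orthogonality to self-duality. No gaps.
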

\begin{lemma}\cite[Lemma 3]{17}
Suppose $n$ is odd.  Let $A=\{a_{1},\ldots,a_{n}\}$ be a subset of $\mathbb{F}_{q}$, such that for $1 \leq i\leq n$, $\eta(-\delta_{A}(a_{i}))=1$. Then there exists a vector $\bm{v}=(v_{1},\ldots,v_{n}) \in (\mathbb{F}^{\ast}_{q} )^{n}$ where $v^{2}_{i}=-\delta_{A}(a_{i})^{-1}$ for $1 \leq i\leq n$,
such that $GRS_{\frac{n+1}{2}} (A\cup\infty, \bm{v})$ is self-dual. Consequently, there exists a $q$-ary MDS self-dual code of length $n+1$.
\end{lemma}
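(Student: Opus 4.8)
The plan is to write down a generator matrix of $GRS_{\frac{n+1}{2}}(A\cup\infty,\bm{v})$ explicitly, reduce self-duality to a short list of power-sum identities in the $a_{i}$, and then settle those identities via the classical partial-fraction expansion of $1/f_{A}(x)$. First note that $GRS_{\frac{n+1}{2}}(A\cup\infty,\bm{v})$ has length $n+1$, dimension $k=\frac{n+1}{2}$, and is MDS (all extended GRS codes are), so it is self-dual precisely when it is self-orthogonal; hence it suffices to exhibit $\bm{v}\in(\mathbb{F}_{q}^{\ast})^{n}$ for which $GG^{\mathrm{T}}=0$, where $G$ is a generator matrix. Using the monomial basis $1,x,\dots,x^{k-1}$ of the polynomials of degree $\le k-1$, and recalling that the last coordinate of a codeword is the coefficient of $x^{k-1}$,
$$
G=\begin{pmatrix}
v_{1} & \cdots & v_{n} & 0\\
v_{1}a_{1} & \cdots & v_{n}a_{n} & 0\\
\vdots & & \vdots & \vdots\\
v_{1}a_{1}^{k-1} & \cdots & v_{n}a_{n}^{k-1} & 1
\end{pmatrix}.
$$

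Writing $w_{i}=v_{i}^{2}$ and indexing rows/columns by $0,\dots,k-1$, the $(j,\ell)$-entry of $GG^{\mathrm{T}}$ equals $\sum_{i=1}^{n}w_{i}a_{i}^{\,j+\ell}+[\,j=\ell=k-1\,]$. Since $j+\ell\le 2k-2=n-1$ with equality only at $(j,\ell)=(k-1,k-1)$ — which is exactly where the extra $1$ coming from the $\infty$-coordinate sits — the code is self-orthogonal if and only if
$$
\sum_{i=1}^{n}w_{i}a_{i}^{\,m}=0\quad(0\le m\le n-2),\qquad \sum_{i=1}^{n}w_{i}a_{i}^{\,n-1}=-1 .
$$
Next I would substitute the prescribed values $w_{i}=v_{i}^{2}=-\delta_{A}(a_{i})^{-1}$, so that the displayed conditions become $\sum_{i=1}^{n}a_{i}^{\,m}/\delta_{A}(a_{i})=0$ for $0\le m\le n-2$ and $\sum_{i=1}^{n}a_{i}^{\,n-1}/\delta_{A}(a_{i})=1$. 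These are the standard Lagrange/partial-fraction identities: from $1/f_{A}(x)=\sum_{i=1}^{n}\big(\delta_{A}(a_{i})(x-a_{i})\big)^{-1}$, expanding both sides in powers of $x^{-1}$ yields $\sum_{r\ge 0}h_{r}(a_{1},\dots,a_{n})x^{-n-r}$ on the left and $\sum_{m\ge 0}\big(\sum_{i}a_{i}^{\,m}/\delta_{A}(a_{i})\big)x^{-m-1}$ on the right, so $\sum_{i}a_{i}^{\,m}/\delta_{A}(a_{i})=h_{m-n+1}(a_{1},\dots,a_{n})$, which vanishes for $m\le n-2$ and equals $h_{0}=1$ for $m=n-1$. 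This matches the self-orthogonality conditions exactly.

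It then remains only to check that such a $\bm{v}$ actually lies in $(\mathbb{F}_{q}^{\ast})^{n}$: for each $i$, the hypothesis $\eta(-\delta_{A}(a_{i}))=1$ forces $\eta(-\delta_{A}(a_{i})^{-1})=1$, so $-\delta_{A}(a_{i})^{-1}$ is a nonzero square and admits a square root $v_{i}\in\mathbb{F}_{q}^{\ast}$. Any resulting $\bm{v}$ makes $GRS_{\frac{n+1}{2}}(A\cup\infty,\bm{v})$ self-dual, an MDS code of length $n+1$, which gives the claim.

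I expect the only mildly delicate point to be the bookkeeping at the point at infinity: correctly identifying the last column of $G$ as $(0,\dots,0,1)^{\mathrm{T}}$ and verifying that the resulting extra term lands precisely in the $(k-1,k-1)$-entry of $GG^{\mathrm{T}}$, matched against the $\sum_{i}w_{i}a_{i}^{\,n-1}=-1$ condition. Once that alignment and the symmetric-function identity above are in place, the remaining verification is purely mechanical.
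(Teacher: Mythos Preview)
The paper does not supply its own proof of this lemma; it is quoted verbatim from \cite[Lemma~3]{17} without argument. Your proof is correct and is essentially the standard one: write out a generator matrix, reduce self-duality of the extended GRS code of dimension $k=(n+1)/2$ to the power-sum conditions
\[
\sum_{i=1}^{n}\frac{a_{i}^{m}}{\delta_{A}(a_{i})}=0\quad(0\le m\le n-2),\qquad \sum_{i=1}^{n}\frac{a_{i}^{n-1}}{\delta_{A}(a_{i})}=1,
\]
and then use the hypothesis $\eta(-\delta_{A}(a_{i}))=1$ to guarantee the square roots $v_{i}$ exist in $\mathbb{F}_{q}^{\ast}$.

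One minor remark: your generating-function derivation of the power-sum identities via the Laurent expansion of $1/f_{A}(x)$ in $\mathbb{F}_{q}((x^{-1}))$ is valid, but over a finite field a reader may prefer the purely algebraic route: for $p(x)=x^{m}$ with $0\le m\le n-1$, Lagrange interpolation gives $p(x)=\sum_{i}\frac{p(a_{i})}{\delta_{A}(a_{i})}\prod_{j\ne i}(x-a_{j})$, and comparing the coefficient of $x^{n-1}$ on both sides yields exactly the two displayed identities. This sidesteps any concern about formal series and makes the argument entirely finite.
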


In the following, two criterions of MDS self-orthogonal codes are given in \cite{16}.
\begin{lemma}\cite[Lemma II.2]{16}
Assume $1\leq k \leq \lfloor \frac{n}{2}\rfloor$. Let $A=\{a_{1},\ldots,a_{n}\}$ be a subset of $\mathbb{F}_{q}$. If there exists a nonzero polynomial $\omega(x)=\omega_{n-2k}x^{n-2k}+\cdots+\omega_{1}x+\omega_{0} \in \mathbb{F}_{q}[x]$ such that for $1 \leq i\leq n$, $\eta(\omega(a_{i})\delta_{A}(a_{i}))=1$. Then there exists a vector $\bm{v}=(v_{1},\ldots,v_{n}) \in (\mathbb{F}^{\ast}_{q} )^{n}$ where $v_{i}^{2} = \omega(a_{i})\delta_{A}(a_{i})^{-1} $ for $1 \leq i\leq n$,
such that $GRS_{k} (A, \bm{v})$ is self-orthogonal. Consequently, there exists a $q$-ary $[n,k]$-MDS self-orthogonal code.
\end{lemma}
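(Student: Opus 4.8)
The plan is to derive the self-orthogonality of $GRS_k(A,\bm{v})$ from the standard description of the dual of a GRS code, using the polynomial $\omega$ as a multiplier that matches the evaluation vector of the code with the one of its dual.

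First I would record the dual-code formula: for any $\bm{u}=(u_1,\ldots,u_n)\in(\mathbb{F}_q^\ast)^n$,
$$GRS_k(A,\bm{u})^{\bot}=GRS_{n-k}(A,\bm{u}^{\bot}),\qquad\text{where } u_i^{\bot}=\bigl(u_i\,\delta_A(a_i)\bigr)^{-1}.$$
This follows from the interpolation identity $\sum_{i=1}^{n}a_i^{m}/\delta_A(a_i)=0$ valid for $0\le m\le n-2$ (the leading coefficient of the Lagrange interpolation of $x^m$): for $\deg f\le k-1$ and $\deg g\le n-k-1$ the product $fg$ has degree $\le n-2$, hence $\langle(u_if(a_i)),(u_i^{\bot}g(a_i))\rangle=\sum_{i=1}^n f(a_i)g(a_i)/\delta_A(a_i)=0$, and since $\dim GRS_k(A,\bm u)+\dim GRS_{n-k}(A,\bm u^{\bot})=n$ the containment is an equality.

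Next I would check that $\bm{v}$ is well defined: the hypothesis $\eta(\omega(a_i)\delta_A(a_i))=1$ forces $\omega(a_i)\ne 0$, and since $\eta(x^{-1})=\eta(x)$ it also gives $\eta(\omega(a_i)\delta_A(a_i)^{-1})=1$, so $\omega(a_i)\delta_A(a_i)^{-1}$ is a nonzero square and admits a square root $v_i\in\mathbb{F}_q^\ast$. Taking $\bm u=\bm v$ in the formula above and writing $v_i^{\bot}=(v_i\delta_A(a_i))^{-1}$, the defining relation $v_i^2\delta_A(a_i)=\omega(a_i)$ yields the key identity $v_i=v_i^{\bot}\,\omega(a_i)$ for every $i$.

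Then the computation is immediate: for $f\in\mathbb{F}_q[x]$ with $\deg f\le k-1$,
$$\bigl(v_1f(a_1),\ldots,v_nf(a_n)\bigr)=\bigl(v_1^{\bot}(\omega f)(a_1),\ldots,v_n^{\bot}(\omega f)(a_n)\bigr),$$
and $\deg(\omega f)\le(n-2k)+(k-1)=n-k-1$, so the right-hand side is a codeword of $GRS_{n-k}(A,\bm v^{\bot})=GRS_k(A,\bm v)^{\bot}$. Hence $GRS_k(A,\bm v)\subseteq GRS_k(A,\bm v)^{\bot}$, i.e. $GRS_k(A,\bm v)$ is self-orthogonal, and being a GRS code it is MDS, giving a $q$-ary $[n,k]$ MDS self-orthogonal code. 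I do not expect a serious obstacle here; the only point requiring care is the bookkeeping in the dual-code formula and the degree bound $\deg(\omega f)\le n-k-1$ — which is exactly why $\omega$ is allowed degree up to $n-2k$ — after which it is a one-line substitution.
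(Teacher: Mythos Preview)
Your argument is correct and is the standard proof of this lemma: use the well-known description of the dual of a GRS code, then observe that multiplication by $\omega$ carries $GRS_k(A,\bm v)$ into $GRS_{n-k}(A,\bm v^{\bot})$ because $\deg(\omega f)\le n-k-1$. The paper itself does not prove this statement at all --- it is quoted without proof from \cite{16} --- so there is nothing to compare beyond noting that your approach is exactly the one underlying the cited result.
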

\begin{lemma}\cite[Lemma II.3]{16}
Assume $1\leq k \leq \lfloor \frac{n+1}{2}\rfloor$. Let $A=\{a_{1},\ldots,a_{n}\}$ be a subset of $\mathbb{F}_{q}$. If there exists a nonzero polynomial $\omega(x)=-x^{n-2k+1}+\omega_{n-2k}x^{n-2k}+\cdots+\omega_{0} \in \mathbb{F}_{q}[x]$ such that for $1 \leq i\leq n$, $\eta(\omega(a_{i})\delta_{A}(a_{i}))=1$. Then there exists a vector $\bm{v}=(v_{1},\ldots,v_{n}) \in (\mathbb{F}^{\ast}_{q} )^{n}$ where $v_{i}^{2} = \omega(a_{i})\delta_{A}(a_{i})^{-1}$ for $1 \leq i\leq n$,
such that  $GRS_{k} (A\cup\infty, \bm{v})$ is self-orthogonal. Consequently, there exists a $q$-ary $[n+1,k]$-MDS self-orthogonal code.
\end{lemma}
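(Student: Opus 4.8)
The plan is to prove $C \subseteq C^{\perp}$ for $C = GRS_k(A\cup\infty,\bm v)$ directly, by producing a generator matrix $G$ of $C$ with $GG^{T}=0$; the only nontrivial input is a Lagrange-interpolation identity. First I would check that $\bm v$ is well defined. Since $\eta(\omega(a_i)\delta_A(a_i))=1$, the element $\omega(a_i)\delta_A(a_i)$ is a nonzero square, hence so is $\omega(a_i)\delta_A(a_i)^{-1}=\bigl(\omega(a_i)\delta_A(a_i)\bigr)\delta_A(a_i)^{-2}$; thus there exists $v_i\in\mathbb F_q^{\ast}$ with $v_i^{2}=\omega(a_i)\delta_A(a_i)^{-1}$ (in particular $\omega(a_i)\neq 0$ for every $i$).

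Next I would take for $G$ the matrix whose $\ell$-th row ($0\le\ell\le k-1$) is the codeword of $C$ associated to $f(x)=x^{\ell}$, namely $\bigl(v_1a_1^{\ell},\dots,v_na_n^{\ell},\,[\ell=k-1]\bigr)$, where $[\,\cdot\,]$ denotes the Iverson bracket and the last entry is the coefficient of $x^{k-1}$ in $x^{\ell}$. These $k$ rows span $C$, and
\[(GG^{T})_{\ell,m}=\sum_{i=1}^{n}v_i^{2}a_i^{\ell+m}+[\ell=k-1]\,[m=k-1]=\sum_{i=1}^{n}\frac{\omega(a_i)\,a_i^{\ell+m}}{\delta_A(a_i)}+[\ell=m=k-1].\]
Since $0\le\ell+m\le 2(k-1)$ and $\deg\omega=n-2k+1$ (which is nonnegative because $k\le\lfloor\tfrac{n+1}{2}\rfloor$), the polynomial $h(x):=\omega(x)x^{\ell+m}$ has degree at most $n-1$. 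I would then invoke the identity that for any $h\in\mathbb F_q[x]$ with $\deg h\le n-1$,
\[\sum_{i=1}^{n}\frac{h(a_i)}{\delta_A(a_i)}=\bigl[\text{coefficient of }x^{n-1}\text{ in }h(x)\bigr],\]
which holds because $\sum_{i}\frac{h(a_i)}{\delta_A(a_i)}\cdot\frac{f_A(x)}{x-a_i}$ is the unique polynomial of degree $\le n-1$ interpolating $h$ on $A$, hence equals $h$, and comparing the coefficients of $x^{n-1}$ on both sides yields the formula. Writing $\omega(x)=-x^{n-2k+1}+\omega_{n-2k}x^{n-2k}+\cdots+\omega_0$, each term $\omega_j x^{j+\ell+m}$ has degree at most $(n-2k)+(2k-2)=n-2$, so it contributes nothing to the coefficient of $x^{n-1}$, while $-x^{\,n-2k+1+\ell+m}$ equals $-x^{n-1}$ precisely when $\ell+m=2k-2$, i.e.\ $\ell=m=k-1$. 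Hence $\sum_{i}\frac{\omega(a_i)a_i^{\ell+m}}{\delta_A(a_i)}$ equals $0$ for $\ell+m\le 2k-3$ and equals $-1$ for $\ell=m=k-1$, so $(GG^{T})_{\ell,m}=0$ in every case. Therefore $C$ is self-orthogonal; as extended GRS codes are MDS, $C$ is a $q$-ary $[n+1,k]$-MDS self-orthogonal code.

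The step I expect to be the main obstacle is the degree bookkeeping at the end: one must verify that $\deg h\le n-1$ holds uniformly over $0\le\ell+m\le 2k-2$ (this is exactly why the leading monomial of $\omega$ is pinned to degree $n-2k+1$), that the intermediate coefficients $\omega_j$ can never push $h$ up to degree $n-1$, and that the extra term from the $\infty$-coordinate is precisely what cancels the surviving $-1$ in the case $\ell=m=k-1$. Everything else — the existence of $\bm v$, the rank of $G$, and the MDS property of extended GRS codes — is routine.
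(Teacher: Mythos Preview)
Your proof is correct. The paper itself does not supply a proof of this lemma --- it is quoted verbatim from \cite{16} --- so there is no in-paper argument to compare against; your Lagrange-interpolation approach (showing $GG^{T}=0$ by identifying $\sum_i \omega(a_i)a_i^{\ell+m}/\delta_A(a_i)$ with the coefficient of $x^{n-1}$ in $\omega(x)x^{\ell+m}$) is the standard one, and the degree bookkeeping you flag as the main obstacle is handled correctly.
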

Moveover, we give a corollary  which will be used in our constructions.
\begin{corollary}
Suppose that $n\leq q-1$ is even. Let $A=\{a_{1},\ldots,a_{n}\}$ be a subset of $\mathbb{F}_{q}^{\ast}$, such that for $1 \leq i\leq n$, $ \eta(-a_{i}\delta_{A}(a_{i}))=1$ and $\eta(-\prod^{n}_{i=1}a_{i})=1$. Then there exists a $q$-ary MDS self-dual code of length $n+2$.
\end{corollary}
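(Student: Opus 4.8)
The plan is to deduce the statement from Lemma 3 (the self-duality criterion for extended GRS codes) by enlarging the evaluation set with the point $0$. Concretely, I would put $A' = A \cup \{0\}$. Since $A \subseteq \mathbb{F}_q^{\ast}$, the element $0$ does not already lie in $A$, so $A'$ is a subset of $\mathbb{F}_q$ with exactly $n+1$ distinct elements; the hypothesis $n \leq q-1$ gives $|A'| = n+1 \leq q$, so $A' \cup \infty$ is a legitimate evaluation set of size $n+2$ for an extended GRS code. Moreover $n+1$ is odd, which is precisely the parity assumption required to apply Lemma 3 to $A'$.

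The key computation is to relate $\delta_{A'}$ to $\delta_A$. For $a_i \in A$, splitting off the new factor $(a_i - 0) = a_i$ gives
$$\delta_{A'}(a_i) = \prod_{a' \in A',\, a' \neq a_i}(a_i - a') = a_i \prod_{a' \in A,\, a' \neq a_i}(a_i - a') = a_i\, \delta_A(a_i),$$
so $\eta\bigl(-\delta_{A'}(a_i)\bigr) = \eta\bigl(-a_i\delta_A(a_i)\bigr) = 1$ by the first hypothesis. For the adjoined point $a = 0$,
$$\delta_{A'}(0) = \prod_{a' \in A}(0 - a') = (-1)^{n}\prod_{i=1}^{n} a_i = \prod_{i=1}^{n} a_i,$$
where the last equality uses that $n$ is even; hence $\eta\bigl(-\delta_{A'}(0)\bigr) = \eta\bigl(-\prod_{i=1}^{n} a_i\bigr) = 1$ by the second hypothesis.

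Therefore every $a \in A'$ satisfies $\eta(-\delta_{A'}(a)) = 1$, and Lemma 3 (applied with its ``$n$'' equal to the odd integer $n+1$) yields a vector $\bm{v} \in (\mathbb{F}_q^{\ast})^{n+1}$ with $v_i^2 = -\delta_{A'}(a_i)^{-1}$ such that $GRS_{\frac{n+2}{2}}(A' \cup \infty, \bm{v})$ is self-dual; this is a $q$-ary MDS self-dual code of length $n+2$. There is essentially no obstacle here beyond the two bookkeeping checks, namely that $0 \notin A$ (immediate from $A \subseteq \mathbb{F}_q^{\ast}$, ensuring $A'$ has $n+1$ distinct elements) and that the sign $(-1)^n$ is $+1$ because $n$ is even.
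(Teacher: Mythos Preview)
Your proof is correct and follows essentially the same route as the paper's: adjoin one extra point to obtain a set of odd size $n+1$, check that $\eta(-\delta)$ equals $1$ at every point using the two hypotheses, and invoke Lemma~3. The paper phrases this via a translated set $\bar{A}=(\alpha+A)\cup\{\alpha\}$ for some $\alpha\in\mathbb{F}_q\setminus A$, but since $A\subseteq\mathbb{F}_q^{\ast}$ one may simply take $\alpha=0$, which reduces exactly to your choice $A'=A\cup\{0\}$; your version is in fact the cleaner formulation.
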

\begin{proof}
Choose $\alpha \in\mathbb{F}_{q}\setminus A$. Let $\bar{A}=(\alpha+A)\cup \{\alpha\}$, then $|\bar{A}|=n+1$. Note that
$$\delta_{\bar{A}}(\alpha)= f_{\alpha+A}(\alpha)= \prod^{n}_{i=1}(\alpha-(\alpha+a_{i}))=\prod^{n}_{i=1}a_{i}$$ and for $1 \leq i\leq n$,
$$\delta_{\bar{A}}(\alpha+a_{i}) = (a_{i}+\alpha-\alpha)\prod^{n}_{j=1, j \neq i}(a_{i}+\alpha-(a_{j}+\alpha))=a_{i}\delta_{A}(a_{i}).$$
If $ \eta(-a_{i}\delta_{A}(a_{i}))=\eta(-\prod^{n}_{i=1}a_{i})=1$ for $1 \leq i\leq n$, then there exists a $q$-ary MDS self-dual code of length $n+2$ by Lemma 3.
\end{proof}

\section{New constructions of MDS self-dual codes}
In this section, for square $q=r^{2}$, we give four new families of $q$-ary MDS self-dual codes via (extended)
GRS codes. The main idea of our constructions is to choose suitable evaluation sets such that the corresponding (extended) GRS codes are self-dual. The evaluation sets are disjoint union of two multiplicative subgroups $A$ and $B$ of $\mathbb{F}_{q}$ and their cosets.

Firstly, we give two new families of $r^{2}$-ary MDS self-dual codes when $r\equiv 1(mod\;4)$.
\begin{theorem}
Let $q = r^{2}$ with $r\equiv 1(mod\;4)$. Suppose that $a$ and $b$ are even divisors of $q-1$ satisfying $a\equiv 2 (mod\;4)$, $2a|b(r+1)$ and $2b|a(r-1)$. Let $n=s\frac{q-1}{a}+t\frac{q-1}{b}$, where $1\leq s \leq \frac{a}{\gcd(a,b)}$ and $1 \leq t \leq \frac{b}{\gcd(a,b)}$,
\begin{description}
\item[\textnormal{(i)}] if $s$ is even, then there exists a $q$-ary MDS self-dual code of length $n$.
\item[\textnormal{(ii)}] if $s$ is odd, then there exists a $q$-ary MDS self-dual code of length $n+2$.
\end{description}
\end{theorem}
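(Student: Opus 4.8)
The plan is to produce, for every admissible $n$, an explicit evaluation set $E\subseteq\mathbb{F}_{q}^{\ast}$ with $|E|=n$ meeting the hypotheses of Lemma~2 for part (i), respectively of Corollary~1 for part (ii). Fix a primitive element $\gamma$ of $\mathbb{F}_{q}^{\ast}$, put $A=\langle\gamma^{a}\rangle$ and $B=\langle\gamma^{b}\rangle$ (multiplicative subgroups of orders $n_{A}=\frac{q-1}{a}$ and $n_{B}=\frac{q-1}{b}$), and set $d=\gcd(a,b)$, which is even since $a,b$ are. Take
$$E=\Big(\bigcup_{i=0}^{s-1}\gamma^{id}A\Big)\ \cup\ \Big(\bigcup_{j=0}^{t-1}\gamma^{1+jd}B\Big).$$
Two cosets $\gamma^{u}A,\gamma^{u'}A$ coincide iff $u\equiv u'\pmod{a}$, two cosets $\gamma^{w}B,\gamma^{w'}B$ coincide iff $w\equiv w'\pmod{b}$, and $\gamma^{u}A$ meets $\gamma^{w}B$ iff $u\equiv w\pmod{d}$; since for $1\le s\le a/d$ and $1\le t\le b/d$ the exponents $id$ are distinct modulo $a$, the exponents $1+jd$ are distinct modulo $b$, and $id\equiv 0\not\equiv 1\equiv 1+jd\pmod{d}$, the union is disjoint, $0\notin E$, and $|E|=sn_{A}+tn_{B}=n$. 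I would also reformulate the hypotheses: from $a\equiv 2\pmod{4}$ one gets $v_{2}(a)=v_{2}(d)=1$, hence $a/d$ is odd, and, separating the $2$-adic and odd parts, $2b\mid a(r-1)$ and $2a\mid b(r+1)$ imply that $a/d\mid r+1$ and $b/d\mid r-1$.

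Next I would compute $\delta_{E}$. Since $f_{E}=\prod_{i}f_{\gamma^{id}A}\cdot\prod_{j}f_{\gamma^{1+jd}B}$ and $f_{\gamma^{c}G}(x)=x^{|G|}-\gamma^{c|G|}$ for a coset of a subgroup $G$ (a coset version of Lemma~1(i)), differentiating this product and evaluating at a point of a single coset annihilates all but one summand; for $x\in\gamma^{i_{0}d}A$ this gives
$$\delta_{E}(x)=n_{A}\,x^{n_{A}-1}\prod_{i\ne i_{0}}\bigl(x^{n_{A}}-\gamma^{id\,n_{A}}\bigr)\prod_{j=0}^{t-1}\bigl(x^{n_{B}}-\gamma^{(1+jd)n_{B}}\bigr),$$
and symmetrically for $x$ in a $B$-coset. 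Using $an_{A}=bn_{B}=q-1$, one rewrites for $x\in\gamma^{i_{0}d}A$ each same-family factor as $\gamma^{id\,n_{A}}\bigl(\psi_{A}^{i_{0}-i}-1\bigr)$ with $\psi_{A}=\gamma^{dn_{A}}$ of order $a/d$, and each cross-family factor as a monomial in $\gamma$ times $1-\rho$, where $\rho$ runs through a coset of the order-$(b/d)$ group $\langle\gamma^{dn_{B}}\rangle$ inside the $b$-th roots of unity; symmetrically for $x$ in a $B$-coset, with the roles of $a/d,b/d$ and of $r+1,r-1$ interchanged.

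The decisive step is to compute $\eta(\delta_{E}(x))$ and show it is the same for every $x\in E$ (part (i)), or else that $\eta(-x\,\delta_{E}(x))=1$ for every $x\in E$ together with $\eta(-\prod_{x\in E}x)=1$ (part (ii)). For this I would pass through the norm identity $\eta(y)=\eta_{r}(y^{r+1})$, where $\eta_{r}$ is the quadratic character of $\mathbb{F}_{r}$, reducing everything to characters over $\mathbb{F}_{r}$, together with the basic computation: for a root of unity $\theta$ with $\theta^{c}\ne 1$, the element $\theta^{c}-1$ is a square in $\mathbb{F}_{q}$ when $\mathrm{ord}(\theta)\mid r-1$ (it then lies in $\mathbb{F}_{r}^{\ast}$), while it is a non-square when $\mathrm{ord}(\theta)\mid r+1$ and $r\equiv 1\pmod{4}$ (write $\theta^{c}-1=\psi(\psi-\psi^{-1})$ with $\psi^{2}=\theta^{c}$ and use $\psi^{r}=\psi^{-1}$). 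The hypotheses $2b\mid a(r-1)$ and $2a\mid b(r+1)$ are exactly what make $\psi_{A}$ have order dividing $r+1$ and $\psi_{B}=\gamma^{dn_{B}}$ have order dividing $r-1$; a telescoping identity for the full $B$-part, namely $\prod_{j=0}^{b/d-1}\bigl(x^{n_{B}}-\gamma^{(1+jd)n_{B}}\bigr)=x^{(q-1)/d}-\gamma^{(q-1)/d}=1-\gamma^{(q-1)/d}$ for $x$ in an $A$-coset, shows the cross-family contributions do not depend on the choice of $x$ within a coset; and $a\equiv 2\pmod{4}$ forces all the monomial factors in $\gamma$, as well as $x^{n_{A}-1}$, to be squares. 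Counting up, the only non-square contributions are the factors $\psi_{A}^{c}-1$, of which there are $s-1$ seen from an $A$-point and $s$ seen from a $B$-point: when $s$ is even the parities line up so that $\eta(\delta_{E}(x))$ is one fixed value on all of $E$, and Lemma~2 gives an MDS self-dual code of length $n$; when $s$ is odd one non-square factor is left unpaired, $\eta(\delta_{E}(x))$ is no longer the constant required by Lemma~2, but a short separate computation of $\prod_{x\in E}x$ (a product over a union of cosets, each contributing $\pm\gamma^{(\mathrm{shift})|G|}$ by Lemma~1(i) evaluated at $0$) gives $\eta(-x\,\delta_{E}(x))=1$ and $\eta(-\prod_{x\in E}x)=1$, so Corollary~1 gives an MDS self-dual code of length $n+2$.

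The main obstacle, I expect, is precisely this uniformity: proving that $\eta(\delta_{E}(x))$ is the same for every $x\in E$ and for every admissible pair $(s,t)$ at once. This forces (a) a careful separation of the $2$-adic from the odd part in the two divisibility conditions; (b) a precise version of the computation of $\eta(\theta^{c}-1)$, covering also the case where $\mathrm{ord}(\theta)$ is odd, which needs passing to a square root of $\theta$ and checking it still lies in $\mathbb{F}_{q}$; and (c) the telescoping identity above, which is what saves the argument when $t<b/d$ and no cancellation is visible term by term. Once these are in place, the remaining parity bookkeeping and the extra verification for Corollary~1 in part (ii) are routine.
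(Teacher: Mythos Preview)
Your evaluation set $E$ is combinatorially fine (the disjointness argument via $\gcd(a,b)$ is correct and cleaner than the paper's), but the arithmetic needed for the quadratic-character computation does not go through with your choice of coset representatives, and this is a genuine gap, not a detail. The problem is in the cross-family factors. For $x\in\gamma^{i_{0}d}A$ you need $\eta\bigl(x^{n_{B}}-\gamma^{(1+jd)n_{B}}\bigr)$ to be independent of $x$. You do have $x^{n_{B}}\in\langle\gamma^{dn_{B}}\rangle\subseteq\mathbb{F}_{r}^{\ast}$ since $b/d\mid r-1$, but $\gamma^{(1+jd)n_{B}}=\gamma^{n_{B}}\cdot(\text{element of }\mathbb{F}_{r}^{\ast})$, and $\gamma^{n_{B}}\in\mathbb{F}_{r}$ only when $b\mid r-1$, which is \emph{not} implied by the hypotheses (take $r=29$, $a=210$, $b=84$: all conditions hold, $d=42$, yet $84\nmid 28$). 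So the cross factor is a difference of an $\mathbb{F}_{r}$-element and a non-$\mathbb{F}_{r}$-element, and its character has no reason to be constant as $x$ varies within the coset. Your telescoping identity only collapses the \emph{full} product over $j=0,\dots,b/d-1$, so it says nothing when $t<b/d$; and even then you are left with $1-\gamma^{(q-1)/d}$, where $d$ need divide neither $r-1$ nor $r+1$. The same obstruction appears, symmetrically, in the cross factors seen from a $B$-point.

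There is also an error in your rule for $\eta(\theta^{c}-1)$ when $\mathrm{ord}(\theta)\mid r+1$: writing $\zeta=\theta^{c}$ one gets $(\zeta-1)^{r-1}=-\zeta^{-1}$ and hence $\eta(\zeta-1)=-\zeta^{-(r+1)/2}$; since $v_{2}(r+1)=1$, this is $-1$ only when $\mathrm{ord}(\zeta)$ is odd, and is $+1$ when $\mathrm{ord}(\zeta)$ is even (e.g.\ $r=5$, $\zeta=\gamma^{4}$ of order $6$ gives $\eta(\gamma^{4}-1)=+1$). This does not hurt your same-family $A$-factors, because there $\mathrm{ord}(\psi_{A})=a/d$ is odd; but it kills your treatment of the cross factors from a $B$-point, where the relevant orders are multiples of $d$ and hence even. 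The paper sidesteps all of this by a different choice of representatives: $\theta^{bi}A$ for the $A$-cosets and $\theta^{(2j+1)a/2}B$ for the $B$-cosets. With that choice every cross-family factor becomes an $(r+1)$-th power (this is exactly where $2b\mid a(r-1)$ is used) and hence lies in $\mathbb{F}_{r}^{\ast}\subseteq QR_{q}$, while the same-family $A$-products satisfy $u^{r}=\pm\,(\text{monomial})\cdot u$ (this is where $2a\mid b(r+1)$ enters), so $\eta(u)$ is read off from $u^{r-1}$ without any term-by-term character computation. Your shifts $\gamma^{id},\gamma^{1+jd}$ lose precisely this alignment with $r\pm 1$.
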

\begin{proof}
Let $\theta$ be a primitive element of $\mathbb{F}_{q}$ and $\gamma=\theta^{\frac{a}{2}}$. Suppose $A=\langle\alpha\rangle$ and $B=\langle\beta\rangle$ are subgroups of $\mathbb{F}_{q}^{\ast}$ generated by $\alpha=\theta^{a}$ and $\beta=\theta^{b}$, respectively. Define
\begin{equation}\label{1}
 S=\big( \bigcup^{s-1}_{i=0}\beta^{i}A \big) \cup \big( \bigcup^{t-1}_{j=0}\gamma^{2j+1}B \big) .
\end{equation}
Note that $a,b$ are even with $a\equiv 2 (mod\;4)$,  then $\alpha,\beta \in QR_{q}$ and $\theta^{\frac{a}{2}} \notin QR_{q}$, hence, $\beta^{i}A \subseteq QR_{q}$ and $\gamma^{2j+1}B \not\subseteq QR_{q}$, i.e., $\beta^{i}A \cap \gamma^{2j+1}B = \emptyset$ for any $0 \leq i \leq s-1$ and $0 \leq j \leq t - 1$. Then we claim that $\beta^{0},\ldots,\beta^{s-1}$ are the representatives of $s$ distinct cosets of the subgroup $A$ in $\mathbb{F}_{q}^{\ast}$. Otherwise, suppose there exist $0 \leq i_{1} < i_{2} \leq s-1$ and $1 \leq h \leq \frac{q-1}{a} $ such that $\beta^{i_{2}-i_{1}}=\alpha^{h}$, i.e., $\theta^{b(i_{2}-i_{1})-ah}=1$, then $(q-1)|\big(b(i_{2}-i_{1})-ah\big)$, which implies that $a|b(i_{2}-i_{1})$. Thus $\frac{a}{\gcd(a,b)} | (i_{2}-i_{1})$, i.e., $\frac{a}{\gcd(a,b)} \leq i_{2}-i_{1}$. Note that $i_{2}-i_{1}\leq s-1 < \frac{a}{\gcd(a,b)}$, which leads to a contradiction. Therefore, $|\bigcup^{s-1}_{i=0}\beta^{i}A|=s\frac{q-1}{a}$. Similarly, we can prove that $\gamma^{1},\gamma^{3},\ldots, \gamma^{2t-1}$ are the representatives of $t$ distinct cosets of the subgroup $B$ in $\mathbb{F}_{q}^{\ast}$, i.e., $| \bigcup^{t-1}_{j=0}\gamma^{2j+1}B|=t\frac{q-1}{b}$. Note that $a\equiv 2 (mod\;4)$, then $\frac{q-1}{a}$ is even. Denote $b=2^{z}b_{1}$, where $b_{1}$ is odd. Since $2b|a(r-1)$ and $r\equiv 1(mod\;4)$, it follows that $2^{z+1}|2(r-1)$, which implies that $2^{z+1}|(q-1)$.  Thus $2b|(q-1)$, i.e., $\frac{q-1}{b}$ is even. In summary, $|S|=n=s\frac{q-1}{a}+t\frac{q-1}{b}$ is even.

First, we calculate $\delta_{S}(\beta^{i}\alpha^{j})$ for $0 \leq i \leq s-1$ and $ 1 \leq j \leq \frac{q-1}{a} $. By Lemma 1,
\begin{equation*}
\begin{aligned}
\delta_{S}(\beta^{i}\alpha^{j})=& \delta_{\beta^{i}A}(\beta^{i}\alpha^{j})\cdot \prod^{s-1}_{l=0,l\neq i} f_{\beta^{l}A}(\beta^{i}\alpha^{j})\cdot \prod^{t-1}_{h=0} f_{\gamma^{2h+1}B}(\beta^{i}\alpha^{j}) \\
  =  &\frac{q-1}{a}\cdot\beta^{i(\frac{q-1}{a}-1)}\cdot  \alpha^{-j} \cdot \prod^{s-1}_{l=0,l\neq i} (\beta^{i\frac{q-1}{a}}-\beta^{l\frac{q-1}{a}})\\
     &                         \cdot \prod^{t-1}_{h=0}(\alpha^{j\frac{q-1}{b}}-\gamma^{(2h+1)\frac{q-1}{b}}).
\end{aligned}
\end{equation*}
Since $\beta$, $\alpha$, $\frac{q-1}{a}$ $\in QR_{q}$, it follows that $\frac{q-1}{a}\cdot\beta^{i(\frac{q-1}{a}-1)}\cdot  \alpha^{-j} \in QR_{q}$. Note that $2b|a(r-1)$, then
$$\prod^{t-1}_{h=0}(\alpha^{j\frac{q-1}{b}}-\gamma^{(2h+1)\frac{q-1}{b}})
=\prod^{t-1}_{h=0} \big( (\theta^{j\frac{a(r-1)}{b}})^{r+1}-(\theta^{(2h+1)\frac{a(r-1)}{2b}})^{r+1} \big) \in \mathbb{F}_{r}^{\ast} \subseteq QR_{q}.$$
We only need to consider
$$u = \prod^{s-1}_{l=0,l\neq i} (\beta^{i\frac{q-1}{a}}-\beta^{l\frac{q-1}{a}}).$$
Note that $2a|b(r+1)$, then $(\beta^{i\frac{q-1}{a}})^{r+1}=(\theta^{i\frac{b(r+1)}{a}})^{q-1}=1$, i.e., $(\beta^{i\frac{q-1}{a}})^{r}=\beta^{-i\frac{q-1}{a}}$. Therefore,
\begin{equation*}
\begin{aligned}
  u^{r}=& \prod^{s-1}_{l=0,l\neq i} (\beta^{-i\frac{q-1}{a}}-\beta^{-l\frac{q-1}{a}}) \\
=& \prod^{s-1}_{l=0,l\neq i}\beta^{-\frac{q-1}{a}(i+l)}(\beta^{l\frac{q-1}{a}}-\beta^{i\frac{q-1}{a}}) \\
=&(-1)^{s-1}\beta^{-\frac{q-1}{a}\sum^{s-1}_{l=0,l\neq i}(i+l)}\cdot u.
\end{aligned}
\end{equation*}
So
 $$ u^{r-1}=\theta^{\frac{q-1}{2}(s-1)-b\frac{q-1}{a}((s-2)i+\frac{s(s-1)}{2})}.$$
Thus
$$u=\theta^{\frac{r+1}{2}(s-1)-\frac{b(r+1)}{a}((s-2)i+\frac{s(s-1)}{2})+k(r+1)}$$
for some integer $k$. Note that $2a|b(r+1)$, i.e., $\frac{b(r+1)}{a}$ is even, then
$$\eta(u)=\eta(\theta^{\frac{r+1}{2}(s-1)}).$$
By the above results, for $0 \leq i \leq s-1$ and $ 1 \leq j \leq \frac{q-1}{a} $, we have
\begin{equation}\label{2}
\eta(\delta_{S}(\beta^{i}\alpha^{j}))=\eta(\theta^{\frac{r+1}{2}(s-1)}).
\end{equation}

Second, we calculate $\delta_{S}(\gamma^{2i+1}\beta^{j})$ for  $0 \leq i \leq t-1$ and $ 1 \leq j \leq \frac{q-1}{b} $. By Lemma 1,
\begin{equation*}
\begin{aligned}
\delta_{S}(\gamma^{2i+1}\beta^{j})
                                  =&\delta_{\gamma^{2i+1}B}(\gamma^{2i+1}\beta^{j})\cdot \prod^{t-1}_{l=0,l\neq i} f_{\gamma^{2l+1}B}(\gamma^{2i+1}\beta^{j})\cdot \prod^{s-1}_{h=0} f_{\beta^{h}A}(\gamma^{2i+1}\beta^{j}) \\
                                  =&\gamma^{(2i+1)(\frac{q-1}{b}-1)}\cdot \frac{q-1}{b}\cdot \beta^{-j}\cdot \prod^{t-1}_{l=0,l\neq i} (\gamma^{(2i+1)\frac{q-1}{b}}-\gamma^{(2l+1)\frac{q-1}{b}})\\
                                   &\cdot (-1)^{s}\prod^{s-1}_{h=0}(\beta^{j\frac{q-1}{a}}+\beta^{h\frac{q-1}{a}}) \;.
\end{aligned}
\end{equation*}
 It is  easy to find that
$\gamma^{(2i+1)\frac{q-1}{b}}\cdot \frac{q-1}{b}\cdot \beta^{-j}\cdot(-1)^{s} \in QR_{q}$. Note that $2b|a(r-1)$, then
$$\prod^{t-1}_{l=0,l\neq i} (\gamma^{(2i+1)\frac{q-1}{b}}-\gamma^{(2l+1)\frac{q-1}{b}})=\prod^{t-1}_{l=0,l\neq i} \big((\theta^{(2i+1)\frac{a(r-1)}{2b}})^{r+1}-(\theta^{(2l+1)\frac{a(r-1)}{2b}})^{r+1} \big)$$
which is in $\mathbb{F}_{r}^{\ast} \subseteq QR_{q}$. We only need to consider $\gamma^{-(2i+1)}$ and
 $$w=\prod^{s-1}_{h=0}(\beta^{j\frac{q-1}{a}}+\beta^{h\frac{q-1}{a}}).$$
Similarly, note that $2a|b(r+1)$, then
\begin{eqnarray*}
 w^{r} &= &\prod^{s-1}_{h=0}(\beta^{-j\frac{q-1}{a}}+\beta^{-h\frac{q-1}{a}})\\
&=& \prod^{s-1}_{h=0}\beta^{-\frac{q-1}{a}(j+h)}(\beta^{j\frac{q-1}{a}}+\beta^{h\frac{q-1}{a}})\\
&=&\beta^{-\frac{q-1}{a}\sum_{h=0}^{s-1}(j+h)}\cdot w.
\end{eqnarray*}
So
 $$w^{r-1}=\theta^{-b\frac{q-1}{a}(sj+\frac{s(s-1)}{2})}.$$
Thus
$$w=\theta^{-\frac{b(r+1)}{a}(sj+\frac{s(s-1)}{2})+k(r+1)}$$
for some integer $k$. Note that $2a|b(r+1)$, i.e., $\frac{b(r+1)}{a}$ is even, then $w \in QR_{q}$.
By the above results, for  $0 \leq i \leq t-1$ and $ 1 \leq j \leq \frac{q-1}{b} $, we have
\begin{equation}\label{3}
\eta(\delta_{S}(\gamma^{2i+1}\beta^{j}))=\eta(\gamma^{-(2i+1)}).
\end{equation}

(i) If  $s$ is even, then $\frac{r+1}{2}(s-1)$ is odd. Therefore, for $0 \leq i \leq s-1$ and $ 1 \leq j \leq \frac{q-1}{a} $,
$$\eta(\delta_{S}(\beta^{i}\alpha^{j}))= \eta(\theta^{\frac{r+1}{2}(s-1)})=-1,$$
for  $0 \leq i \leq t-1$ and $ 1 \leq j \leq \frac{q-1}{b}$,
$$\eta(\delta_{S}(\gamma^{2i+1}\beta^{j}))=\eta(\gamma^{-(2i+1)})=-1.$$
By Lemma 2, there exists a $q$-ary MDS self-dual code of length $n$.

(ii) If $s$ is odd, then $\frac{r+1}{2}(s-1)$ is even. Therefore,  for $0 \leq i \leq s-1$ and $ 1 \leq j \leq \frac{q-1}{a} $,
$$\eta(\beta^{i}\alpha^{j}\cdot\delta_{S}(\beta^{i}\alpha^{j}))=\eta(\theta^{\frac{r+1}{2}(s-1)})=1,$$
for  $0 \leq i \leq t-1$ and $ 1 \leq j \leq \frac{q-1}{b}$,
$$\eta(\gamma^{2i+1}\beta^{j}\cdot\delta_{S}(\gamma^{2i+1}\beta^{j}))=\eta(\gamma^{(2i+1)(1-1)})=1$$
and
$$\eta\big(\prod_{j=1}^{\frac{q-1}{a}}\prod_{i=0}^{s-1}(\beta^{i}\alpha^{j})\cdot\prod_{i=0}^{t-1}\prod_{j=1}^{\frac{q-1}{b}}(\gamma^{2i+1}\beta^{j}) \big)=1.$$
Note that $\eta(-1)=1$, then there exists a $q$-ary MDS self-dual code of length $n+2$ by Corollary 1.
\end{proof}


Secondly, we give two new families of $r^{2}$-ary MDS self-dual codes when $r\equiv 3(mod\;4)$.
\begin{theorem}
Let $q = r^{2}$ with $r\equiv 3(mod\;4)$. Suppose $a$ and $b$ are even divisors of $q-1$ satisfying $b\equiv 2 (mod\;4)$, $2a|b(r+1)$ and $2b|a(r-1)$. Let $n=s\frac{q-1}{a}+t\frac{q-1}{b}$, where $1\leq s \leq \frac{a}{\gcd(a,b)}$ and $1 \leq t \leq \frac{b}{\gcd(a,b)}$,
\begin{description}
\item[\textnormal{(i)}] if $\frac{(r+1)b}{2a}s^{2}$ is odd,  then there exists a $q$-ary MDS self-dual code of length $n$.
\item[\textnormal{(ii)}] if $\frac{(r+1)b}{2a}s^{2}$ is even, then there exists a $q$-ary MDS self-dual code of length $n+2$.
\end{description}
\end{theorem}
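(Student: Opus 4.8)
The plan is to follow the proof of Theorem 1, with two changes dictated by the hypotheses: now it is $b$ (not $a$) that is $\equiv 2 \pmod{4}$, and $r \equiv 3 \pmod{4}$. Let $\theta$ be a primitive element of $\mathbb{F}_q$ and set $\gamma = \theta^{b/2}$, $\alpha = \theta^{a}$, $\beta = \theta^{b}$, $A = \langle \alpha \rangle$, $B = \langle \beta \rangle$. I would take as evaluation set
\[
S = \Big( \bigcup_{i=0}^{s-1} \gamma^{2i+1} A \Big) \cup \Big( \bigcup_{j=0}^{t-1} \alpha^{j} B \Big).
\]
Since $b/2$ is odd we have $\gamma \notin QR_q$, while $\alpha, \beta \in QR_q$; hence every coset $\gamma^{2i+1}A$ consists of non-squares and every coset $\alpha^{j}B$ of squares, so the two unions are disjoint. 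Exactly as in Theorem 1 (using $s \leq \frac{a}{\gcd(a,b)}$ and $t \leq \frac{b}{\gcd(a,b)}$) the displayed cosets are pairwise distinct, and the hypotheses $2a \mid b(r+1)$, $2b \mid a(r-1)$ together with $r \equiv 3 \pmod{4}$ make both $\frac{q-1}{a}$ and $\frac{q-1}{b}$ even; thus $|S| = n$ is even and $S \subseteq \mathbb{F}_q^{\ast}$.

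Next I would evaluate $\eta(\delta_S(x))$ on the two parts of $S$, along the lines of Theorem 1. For $x = \gamma^{2i+1}\alpha^{j}$, Lemma 1 writes $\delta_S(x)$ as the product of a prefactor $\frac{q-1}{a}(\gamma^{2i+1})^{\frac{q-1}{a}-1}\alpha^{-j}$, a factor $u = \prod_{l\neq i}\big( (\gamma^{2i+1})^{\frac{q-1}{a}} - (\gamma^{2l+1})^{\frac{q-1}{a}} \big)$, and a factor $(-1)^{t}\prod_{h=0}^{t-1}\big( \alpha^{j\frac{q-1}{b}} + \alpha^{h\frac{q-1}{b}} \big)$. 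From $2b \mid a(r-1)$ one gets $\alpha^{j\frac{q-1}{b}} \in \mathbb{F}_{r}^{\ast}$, so the last factor is a nonzero element of $\mathbb{F}_r$, hence a square; from $2a \mid b(r+1)$, each $(\gamma^{2l+1})^{\frac{q-1}{a}}$ is an $(r+1)$-th root of unity, so $z^{r} = z^{-1}$ there and the Frobenius trick used for the analogous $u$ in Theorem 1 applies. Carrying it out (using $\sum_{h=0}^{s-1}(2h+1)=s^{2}$) gives $u^{r-1} = \theta^{(r-1)\left[\frac{r+1}{2}(s-1) - \frac{b(r+1)}{2a}\left((2i+1)(s-2)+s^{2}\right)\right]}$; since $r \equiv 3 \pmod{4}$ the integer $\frac{r+1}{2}$ is even, and $(2i+1)(s-2)+s^{2} \equiv 2s \equiv 0 \pmod{2}$, so $\eta(u)=1$. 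As $\gamma^{2i+1}\notin QR_q$ and $\frac{q-1}{a}-1$ is odd, the prefactor has $\eta=-1$; hence $\eta\big(\delta_S(\gamma^{2i+1}\alpha^{j})\big) = -1$ for all admissible $i,j$.

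For $x = \alpha^{i}\beta^{j}$, the analogous decomposition has a prefactor in $QR_q$, a middle factor $\prod_{l\neq i}\big( (\alpha^{i})^{\frac{q-1}{b}} - (\alpha^{l})^{\frac{q-1}{b}} \big) \in \mathbb{F}_{r}^{\ast}$ (again from $2b \mid a(r-1)$), and a factor $V = \prod_{h=0}^{s-1}\big( \beta^{j\frac{q-1}{a}} - (\gamma^{2h+1})^{\frac{q-1}{a}} \big)$ all of whose entries are $(r+1)$-th roots of unity (using $2a \mid b(r+1)$). The same Frobenius computation yields $V^{r-1} = \theta^{(r-1)\left[\frac{r+1}{2}s - \frac{b(r+1)}{2a}(2js+s^{2})\right]}$, and since $\frac{r+1}{2}s$ and $2js \cdot \frac{b(r+1)}{2a}$ are even, $\eta(V) = (-1)^{\frac{(r+1)b}{2a}s^{2}}$, so $\eta\big(\delta_S(\alpha^{i}\beta^{j})\big) = (-1)^{\frac{(r+1)b}{2a}s^{2}}$ for all admissible $i,j$. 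In case (i) this common value is $-1$, so $\eta(\delta_S(x))$ is constant on $S$ and Lemma 2 gives a $q$-ary MDS self-dual code of length $n$. In case (ii) it is $1$; then (with $\eta(-1)=1$ since $r\equiv 3 \pmod{4}$ makes $\frac{q-1}{2}$ even, $\eta(\gamma^{2i+1}\alpha^{j})=-1$ and $\eta(\alpha^{i}\beta^{j})=1$) one has $\eta\big(-x\,\delta_S(x)\big)=1$ for every $x\in S$, and $\eta\big(-\prod_{x\in S}x\big)=1$ because the non-square part $\bigcup_i\gamma^{2i+1}A$ has even cardinality $s\frac{q-1}{a}$; so Corollary 1 gives a $q$-ary MDS self-dual code of length $n+2$.

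I expect the main obstacle to be the two Frobenius character computations for $u$ and $V$: one must use precisely the divisibility hypotheses $2a \mid b(r+1)$ and $2b \mid a(r-1)$ to certify that the elements involved are $(r+1)$-th roots of unity (so $z^{r}=z^{-1}$) or lie in $\mathbb{F}_{r}^{\ast}$ (so $\eta=1$), and then correctly reduce all exponents modulo $2$. The decisive structural point is that $r\equiv 3 \pmod{4}$ forces $\frac{r+1}{2}$ to be even, so the term that produced the $s$-parity dichotomy in Theorem 1 now vanishes, and the dichotomy here is instead governed by the parity of $\frac{(r+1)b}{2a}s^{2}$, which surfaces in $\eta(V)$ — that is, in $\delta_S$ evaluated on the cosets of $B$.
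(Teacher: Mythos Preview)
Your proposal is correct and follows essentially the same approach as the paper: the evaluation set, the Frobenius computations for the two critical products, and the application of Lemma 2 / Corollary 1 all match the paper's proof of Theorem 2 (your $\gamma$ is the paper's $\xi$, your $S$ is the paper's $T$). The only cosmetic difference is that the paper leaves the last product in the $\xi^{2i+1}\alpha^{j}$ case as $\prod_{h}\big((\xi^{2i+1}\alpha^{j})^{\frac{q-1}{b}}-\alpha^{h\frac{q-1}{b}}\big)$ and observes it lies in $\mathbb{F}_{r}^{\ast}$ directly, whereas you first rewrite it as $(-1)^{t}\prod_{h}\big(\alpha^{j\frac{q-1}{b}}+\alpha^{h\frac{q-1}{b}}\big)$; both arrive at the same conclusion.
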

\begin{proof}
Let $\theta$ be a primitive element of $\mathbb{F}_{q}$ and $\xi=\theta^{\frac{b}{2}}$. Suppose $A=\langle\alpha\rangle$ and $B=\langle\beta\rangle$ are subgroups of $\mathbb{F}_{q}^{\ast}$ generated by $\alpha=\theta^{a}$ and $\beta=\theta^{b}$, respectively. Define
\begin{equation}\label{4}
 T=\big( \bigcup^{t-1}_{i=0}\alpha^{i}B \big) \cup \big( \bigcup^{s-1}_{j=0}\xi^{2j+1}A \big).
\end{equation}
Since $b\equiv 2 (mod\;4)$, it follows that $\alpha^{i}B \cap \xi^{2j+1}A = \emptyset$ for any $0 \leq i \leq t-1$
and $0 \leq j \leq s - 1$. Similarly, we can prove that $|T|=n=s\frac{q-1}{a}+t\frac{q-1}{b}$ is even.

First, we calculate $\delta_{T}(\alpha^{i}\beta^{j})$ for $ 0 \leq i \leq t-1$ and $ 1 \leq j \leq \frac{q-1}{b}$. By Lemma 1,
\begin{equation*}
\begin{aligned}
\delta_{T}(\alpha^{i}\beta^{j})=&\delta_{\alpha^{i}B}(\alpha^{i}\beta^{j}) \cdot \prod^{t-1}_{l=0,l\neq i} f_{\alpha^{l}B}(\alpha^{i}\beta^{j})
                               \cdot \prod^{s-1}_{h=0}f_{\xi^{2h+1}A}(\alpha^{i}\beta^{j})\\
                   = &\alpha^{i(\frac{q-1}{b}-1)}\cdot \frac{q-1}{b}\cdot \beta^{-j} \cdot \prod^{t-1}_{l=0,l\neq i} (\alpha^{i\frac{q-1}{b}}-\alpha^{l\frac{q-1}{b}})\\
                         &     \cdot \prod^{s-1}_{h=0}(\beta^{j\frac{q-1}{a}}-\xi^{(2h+1)\frac{q-1}{a}}).
\end{aligned}
\end{equation*}
It is easy to find that $\alpha^{i(\frac{q-1}{b}-1)}\cdot \frac{q-1}{b}\cdot \beta^{-j}$ $\in QR_{q}$. Note that $2b|a(r-1)$, then
$$\prod^{t-1}_{l=0,l\neq i} (\alpha^{i\frac{q-1}{b}}-\alpha^{l\frac{q-1}{b}})
=\prod^{t-1}_{l=0,l\neq i}\big((\theta^{i\frac{r-1}{b}a})^{r+1}-(\theta^{l\frac{r-1}{b}a})^{r+1}\big) \in \mathbb{F}_{r}^{\ast} \subseteq QR_{q}.$$
We only need to consider
$$u'= \prod^{s-1}_{h=0}(\beta^{j\frac{q-1}{a}}-\xi^{(2h+1)\frac{q-1}{a}}).$$
Note that $2a|b(r+1)$, then
\begin{eqnarray*}
  u'^{r} &=& \prod^{s-1}_{h=0}(\beta^{-j\frac{q-1}{a}}-\xi^{-(2h+1)\frac{q-1}{a}})  \\
  &=& \prod^{s-1}_{h=0}\theta^{-b\frac{q-1}{2a}(2j+2h+1)}(-\beta^{j\frac{q-1}{a}}+\xi^{(2h+1)\frac{q-1}{a}})  \\
 &=& (-1)^{s}\theta^{-\frac{q-1}{2a}b(2sj+s^{2})}\cdot u'.
\end{eqnarray*}
Thus
$$u'=\theta^{\frac{r+1}{2}s-2sj\frac{r+1}{2a}b-\frac{r+1}{2a}bs^{2}+k(r+1)}$$
for some integer $k$. Then
$$\eta(u')=\eta(\theta^{-\frac{r+1}{2a}bs^{2}}).$$
By the above results, for $ 0 \leq i \leq t-1$ and $ 1 \leq j \leq \frac{q-1}{b}$, we have
\begin{equation}\label{5}
\eta(\delta_{T}(\alpha^{i}\beta^{j}))=\eta(\theta^{-\frac{r+1}{2a}bs^{2}}).
\end{equation}

Second, we calculate $\delta_{T}(\xi^{2i+1}\alpha^{j})$ for $0 \leq i \leq s-1$ and $1\leq j \leq \frac{q-1}{a}$. By Lemma 1,
\begin{equation*}
\begin{aligned}
  \delta_{T}(\xi^{2i+1}\alpha^{j}) =&\delta_{\xi^{2i+1}A}(\xi^{2i+1}\alpha^{j}) \cdot \prod^{s-1}_{l=0,l\neq i} f_{\xi^{2l+1}A}(\xi^{2i+1}\alpha^{j})
                               \cdot \prod^{t-1}_{h=0}f_{\alpha^{h}B}(\xi^{2i+1}\alpha^{j})\\
 =& \xi^{(2i+1)(\frac{q-1}{a}-1)}\cdot \frac{q-1}{a}\cdot\alpha^{-j}\cdot \prod^{s-1}_{l=0,l\neq i} (\xi^{(2i+1)\frac{q-1}{a}}-\xi^{(2l+1)\frac{q-1}{a}})\\
                 &             \cdot \prod^{t-1}_{h=0}\big((\xi^{2i+1}\alpha^{j})^{\frac{q-1}{b}}-\alpha^{h\frac{q-1}{b}}\big).
\end{aligned}
\end{equation*}
It is easy to find that $\xi^{(2i+1)(\frac{q-1}{a})}\cdot \frac{q-1}{a}\cdot\alpha^{-j}$ $\in QR_{q}$. Note that $2b|a(r-1)$, then
$$ \prod^{t-1}_{h=0}\big((\xi^{2i+1}\alpha^{j})^{\frac{q-1}{b}}-\alpha^{h\frac{q-1}{b}}\big)=\prod^{t-1}_{h=0}\big((\theta^{\frac{r-1}{2}(2i+1)+\frac{ja(r-1)}{b}})^{r+1}-(\theta^{\frac{ah(r-1)}{b}})^{r+1}\big)$$
which is in $\mathbb{F}_{r}^{\ast} \subseteq QR_{q}$. We only need to consider $\xi^{-(2i+1)}$ and
$$w' =  \prod^{s-1}_{l=0,l\neq i} (\xi^{(2i+1)\frac{q-1}{a}}-\xi^{(2l+1)\frac{q-1}{a}}).$$
Similarly, note that $2a|b(r+1)$, then
$$w'^{r-1}=(-1)^{s-1}\theta^{-\frac{b(q-1)}{2a}(s^{2}+(2i+1)(s-2))}.$$
Thus
$$w'=\theta^{\frac{r+1}{2}(s-1)}\theta^{-\frac{b(r+1)}{2a}(s^{2}+(2i+1)(s-2))+k(r+1)}$$
for some integer $k$. Whenever $s$ is odd or even, $s^{2}+(2i+1)(s-2)$ is even. Therefore $w' \in QR_{q}$.
By the above results, for $0 \leq i \leq s-1$ and $1\leq j \leq \frac{q-1}{a}$, we have
\begin{equation}\label{6}
\eta(\delta_{T}(\xi^{2i+1}\alpha^{j}))=\eta(\xi^{-(2i+1)}).
\end{equation}

(i) If $\frac{(r+1)b}{2a}s^{2}$ is odd, then for $ 0 \leq i \leq t-1$ and $ 1 \leq j \leq \frac{q-1}{b}$,
$$\eta(\delta_{T}(\alpha^{i}\beta^{j}))=\eta(\theta^{\frac{r+1}{2a}bs^{2}})=-1,$$
for $0 \leq i \leq s-1$ and $1\leq j \leq \frac{q-1}{a}$,
$$\eta(\delta_{T}(\xi^{2i+1}\alpha^{j}))=\eta(\xi^{-(2i+1)})=-1.$$
By Lemma 2, there exists a $q$-ary MDS self-dual code of length $n$.

(ii) If $\frac{(r+1)b}{2a}s^{2}$ is even, then for $ 0 \leq i \leq t-1$ and $ 1 \leq j \leq \frac{q-1}{b}$,
$$\eta(\alpha^{i}\beta^{j}\cdot\delta_{T}(\alpha^{i}\beta^{j}))=\eta(\theta^{\frac{r+1}{2a}bs^{2}})=1,$$
for $0 \leq i \leq s-1$ and $1\leq j \leq \frac{q-1}{a}$,
$$\eta(\xi^{2i+1}\alpha^{j}\cdot\delta_{T}(\xi^{2i+1}\alpha^{j}))=\eta(\xi^{(2i+1)(1-1)})=1$$
and
$$\eta\big(\prod_{j=1}^{\frac{q-1}{b}}\prod_{i=0}^{t-1}(\alpha^{i}\beta^{j})\cdot\prod_{i=0}^{s-1}\prod_{j=1}^{\frac{q-1}{a}}(\xi^{2i+1}\alpha^{j}) \big)=1.$$
Note that $\eta(-1)=1$, then there exists a $q$-ary  MDS self-dual code of length $n+2$ by Corollary 1.
\end{proof}


In addition, it should be noted that different $(a,b,s,t)$ may lead to the same length $n$. To avoid obtaining the same lengths repetitively, we first prove that all $(a,b)$ with the same greatest common divisor can be replaced by  a common one in Proposition 1.  Let $I_{q}$ be the set of $(a,b)$ given as Theorem 1 or Theorem 2 and $L_{(a,b)}$  be the set of   lengths of
$q$-ary MDS self-dual codes produced by our constructions for given $(a,b)$.
\begin{proposition}
Suppose that $q=r^{2}$. For any $(a,b)\in I_{q}$, let $v=\frac{\gcd(\gcd(a,b),r+1)}{2}$ and $u=\frac{\gcd(\gcd(a,b),r-1)}{2}$. Then
$$L_{(a,b)} \subseteq L_{(u(r+1),v(r-1))} . $$
\end{proposition}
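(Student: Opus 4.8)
The proof has two parts. First I verify that $(a',b') := (u(r+1),\, v(r-1))$ again belongs to $I_q$ — so that $L_{(a',b')}$ is defined — and, more precisely, that $a\mid a'$, $b\mid b'$ and $\gcd(a',b')=\gcd(a,b)$. Then I transport each admissible parameter pair $(s,t)$ used for $(a,b)$ to an admissible pair $(s',t')$ for $(a',b')$ yielding the same code length and the same parity data.

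Set $g=\gcd(a,b)$ and write $r-1=2^{\gamma_-}m$, $r+1=2^{\gamma_+}m'$ with $m,m'$ odd; then $\gcd(m,m')=1$, and exactly one of $\gamma_-,\gamma_+$ equals $1$ (the other being $\ge2$), according to $r\bmod 4$. Since $a,b$ divide $q-1=2^{\gamma_-+\gamma_+}mm'$, factor $a=2^{\alpha}a_ma_{m'}$ and $b=2^{\beta}b_mb_{m'}$, where $a_m,b_m$ are odd divisors of $m$ and $a_{m'},b_{m'}$ odd divisors of $m'$ (recall membership in $I_q$ forces $\alpha=1$ when $r\equiv1\pmod4$ and $\beta=1$ when $r\equiv3\pmod4$). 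Splitting the hypotheses $2a\mid b(r+1)$ and $2b\mid a(r-1)$ into $2$-parts and odd parts yields $\alpha\le\gamma_+$, $\beta\le\gamma_-$ together with the two clean divisibilities $a_m\mid b_m$ and $b_{m'}\mid a_{m'}$. From these, $g=2a_mb_{m'}$, hence $u=\gcd(g,r-1)/2=a_m$ and $v=\gcd(g,r+1)/2=b_{m'}$, so $a'=a_m(r+1)=2^{\gamma_+}a_mm'$ and $b'=b_{m'}(r-1)=2^{\gamma_-}b_{m'}m$. It is now routine to check: $a',b'$ are even divisors of $q-1$ (as $a_m\mid m$, $b_{m'}\mid m'$); $a'\equiv2\pmod4$ if $r\equiv1\pmod4$ and $b'\equiv2\pmod4$ if $r\equiv3\pmod4$ (because then $\gamma_+=1$, resp. $\gamma_-=1$); $2a'\mid b'(r+1)$ and $2b'\mid a'(r-1)$ reduce to $a_m\mid b_{m'}m$ and $b_{m'}\mid a_mm'$; $\gcd(a',b')=2a_mb_{m'}=g$; and $\tfrac{a'}{a}=2^{\gamma_+-\alpha}\tfrac{m'}{a_{m'}}$, $\tfrac{b'}{b}=2^{\gamma_--\beta}\tfrac{m}{b_m}$ are integers, i.e. $a\mid a'$ and $b\mid b'$. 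Hence $(a',b')\in I_q$.

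For the second part, from $a\mid a'$ we get $\tfrac{q-1}{a}=\tfrac{a'}{a}\cdot\tfrac{q-1}{a'}$, and similarly for $b$. Given $(s,t)$ with $1\le s\le\tfrac{a}{g}$ and $1\le t\le\tfrac{b}{g}$, set $s'=\tfrac{a'}{a}s$ and $t'=\tfrac{b'}{b}t$; these are positive integers, and $s\tfrac{q-1}{a}+t\tfrac{q-1}{b}=s'\tfrac{q-1}{a'}+t'\tfrac{q-1}{b'}=:n$. Because $\gcd(a',b')=g$, the inequality $s'\le\tfrac{a'}{g}$ is equivalent to $s\le\tfrac{a}{g}$, and likewise for $t'$, so $(s',t')$ is admissible for $(a',b')$. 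It remains to see that $(s,t)$ and $(s',t')$ land in matching cases of the two theorems, so that the length actually produced ($n$ or $n+2$) is the same: if $r\equiv1\pmod4$ then $\gamma_+=\alpha=1$, so $\tfrac{a'}{a}=\tfrac{m'}{a_{m'}}$ is odd and $s'\equiv s\pmod2$, whence Theorem~1 applies to $(s',t')$ in the same case (i)/(ii) as to $(s,t)$; if $r\equiv3\pmod4$ then a short computation shows $\tfrac{(r+1)b}{2a}s^2$ is odd precisely when $\gamma_+=\alpha$ and $s$ is odd, which is equivalent to $s'$ being odd and hence to $\tfrac{(r+1)b'}{2a'}(s')^2$ being odd, so Theorem~2 applies to $(s',t')$ in the matching case. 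Therefore $L_{(a,b)}\subseteq L_{(a',b')}$.

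The crux is the $2$-adic bookkeeping needed to extract the divisibilities $a_m\mid b_m$ and $b_{m'}\mid a_{m'}$ from the hypotheses on $a,b$, and then to confirm that the single substitution $(s,t)\mapsto\big(\tfrac{a'}{a}s,\ \tfrac{b'}{b}t\big)$ simultaneously preserves the length, keeps the parameters admissible — which becomes automatic once $\gcd(a',b')=\gcd(a,b)$ — and preserves the parity condition of Theorems 1 and 2 that distinguishes length $n$ from length $n+2$. The case split on $r\bmod4$ roughly doubles the work but introduces no new difficulty.
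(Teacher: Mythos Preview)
Your proof is correct and follows essentially the same route as the paper: both verify that $(u(r+1),v(r-1))\in I_q$ with $\gcd(u(r+1),v(r-1))=\gcd(a,b)$, and both use the substitution $(s,t)\mapsto\bigl(\tfrac{u(r+1)}{a}s,\ \tfrac{v(r-1)}{b}t\bigr)$ to carry lengths over. Your argument is in fact more complete than the paper's, which treats only the case of Theorem~1(i) ``without loss of generality'' and asserts the key facts $(u(r+1),v(r-1))\in I_q$ and $\gcd(a,b)=2uv$ as ``easy to find,'' whereas you supply the $2$-adic bookkeeping explicitly and verify the parity preservation needed for all four subcases of Theorems~1 and~2.
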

\begin{proof}
Without loss of generality, we only consider the construction of Theorem 1(i). Then
$$I_{q}=\{(a,b): a,b|(q-1), a,b \textnormal{ even}, a\equiv 2 (mod\;4), 2a|b(r+1), 2b|a(r-1)\},$$
$$L_{(a,b)}=\{s\frac{q-1}{a}+t\frac{q-1}{b} : 1\leq s \leq \frac{a}{\gcd(a,b)}, 1 \leq t \leq \frac{b}{\gcd(a,b)} \textnormal{, $s$ even}  \}.$$
It is easy to find that $(u(r+1),v(r-1))\in I_{q}$ and $\gcd(a,b)=2uv$. Note that
\begin{equation*}
\begin{aligned}
L_{(u(r+1),v(r-1))}=\{s\frac{r-1}{u}+t\frac{r+1}{v}:  1\leq s\leq \frac{r+1}{2v},
 1\leq t\leq \frac{r-1}{2u} \textnormal{, $s$ even} \}
\end{aligned}
\end{equation*}
and
\begin{equation*}
\begin{aligned}
L_{(a,b)}=\{s'\frac{q-1}{2a'uv}+t'\frac{q-1}{2b' uv}:  1\leq s'\leq a',
 1\leq t'\leq b' \textnormal{, $s'$ even} \},
\end{aligned}
\end{equation*}
 where $a'=\frac{a}{2uv}$ and $b'=\frac{b}{2uv}$. Since $2a|b(r+1)$, it follows that $4a' uv|2b' uv(r+1)$, i.e., $a'|\frac{r+1}{2}$, thus $\gcd(a' v,r-1)=1$. Note that $4a' uv|(r+1)(r-1)$, then $a' v| \frac{r+1}{2}$. Similarly, we can prove that $b' u| \frac{r-1}{2}$. Then
\begin{eqnarray*}
s'\frac{q-1}{2a' uv}+t'\frac{q-1}{2b' uv}=\frac{s'(r+1)}{2a' v}\frac{r-1}{u}+\frac{t'(r-1)}{2b' u}\frac{r+1}{v}
\end{eqnarray*}
 where $\frac{s'(r+1)}{2a' v}\leq \frac{r+1}{2 v}$ and $\frac{t'(r-1)}{2b' u}\leq \frac{r-1}{2 u}$.
 Therefore $$L_{(a,b)} \subseteq L_{(u(r+1),v(r-1))}.$$
The conclusion follows.
\end{proof}

According to Proposition 1, for our constructions, we can replace all $(a, b)$ by a particular subset in Proposition 2.
\begin{proposition}
Suppose that $q=r^{2}$, $u_{1},\ldots,u_{i}$ are all the odd divisors of $r-1$ and $v_{1},\ldots,v_{j}$ are all the odd divisors of $r+1$. Let $U=\{u_{1}(r+1),\ldots,u_{i}(r+1)\}$ and $V=\{v_{1}(r-1),\ldots,v_{j}(r-1)\}$. Then
$$\bigcup_{(a,b)\in I_{q}}L_{(a,b)}=\bigcup_{(a,b)\in U\times V}L_{(a,b)}.$$
\end{proposition}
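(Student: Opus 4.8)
The plan is to prove the two inclusions separately, using Proposition~1 for the harder one. The inclusion $\bigcup_{(a,b)\in U\times V}L_{(a,b)}\subseteq\bigcup_{(a,b)\in I_{q}}L_{(a,b)}$ is the easy direction: once we know $U\times V\subseteq I_{q}$, every set $L_{(a,b)}$ with $(a,b)\in U\times V$ already occurs in the union on the left-hand side. So first I would verify $U\times V\subseteq I_{q}$. Fix $a=u_{k}(r+1)$ with $u_{k}$ an odd divisor of $r-1$, and $b=v_{l}(r-1)$ with $v_{l}$ an odd divisor of $r+1$. Then $a\mid(r-1)(r+1)=q-1$, $b\mid q-1$, both are even, and the required congruence ($a\equiv 2\pmod{4}$ when $r\equiv1\pmod{4}$, or $b\equiv2\pmod{4}$ when $r\equiv3\pmod{4}$) follows because exactly one of $r\pm1$ is $\equiv 2\pmod{4}$ while $u_{k},v_{l}$ are odd. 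Finally $2a=2u_{k}(r+1)\mid v_{l}(r-1)(r+1)=b(r+1)$, because $u_{k}\mid(r-1)$, $2\mid(r-1)$ and $\gcd(2,u_{k})=1$ force $2u_{k}\mid(r-1)$; symmetrically $2b\mid a(r-1)$. Hence $(a,b)\in I_{q}$.

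For the reverse inclusion I would take an arbitrary $(a,b)\in I_{q}$ and set $v=\tfrac{1}{2}\gcd(\gcd(a,b),r+1)$ and $u=\tfrac{1}{2}\gcd(\gcd(a,b),r-1)$, exactly as in Proposition~1, which then gives $L_{(a,b)}\subseteq L_{(u(r+1),\,v(r-1))}$. Thus the statement reduces to checking that $\big(u(r+1),v(r-1)\big)\in U\times V$, i.e.\ that $u$ is an odd divisor of $r-1$ and $v$ is an odd divisor of $r+1$. The divisibility part is automatic, since $u\mid\gcd(\gcd(a,b),r-1)\mid(r-1)$ and similarly $v\mid(r+1)$; so the real content is the oddness of $u$ and $v$.

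This oddness is the one place where something genuinely has to be checked, and it is a short $2$-adic valuation count. In both regimes defining $I_{q}$, one of $a,b$ is $\equiv2\pmod{4}$, hence has $2$-adic valuation $1$; since the other of $a,b$ is even, $v_{2}(\gcd(a,b))=\min\{v_{2}(a),v_{2}(b)\}=1$. Because $r$ is odd, $r-1$ and $r+1$ are both even, so $v_{2}\big(\gcd(\gcd(a,b),r\pm1)\big)=\min\{1,v_{2}(r\pm1)\}=1$, and dividing by $2$ makes $u$ and $v$ odd. Therefore $u\in\{u_{1},\dots,u_{i}\}$ and $v\in\{v_{1},\dots,v_{j}\}$, so $\big(u(r+1),v(r-1)\big)\in U\times V$ and $L_{(a,b)}\subseteq\bigcup_{(a,b)\in U\times V}L_{(a,b)}$. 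Taking the union over all $(a,b)\in I_{q}$ and combining with the easy inclusion gives the claimed equality. I do not expect a real obstacle here; the only subtlety is keeping track of the single factor of $2$, which is exactly what the hypotheses $a\equiv2\pmod{4}$ (Theorem~1) and $b\equiv2\pmod{4}$ (Theorem~2) guarantee.
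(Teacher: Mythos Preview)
Your proof is correct and follows essentially the same route as the paper: both arguments hinge on Proposition~1 to reduce every $L_{(a,b)}$ to some $L_{(u(r+1),v(r-1))}$, together with the observation that $\gcd(a,b)$ has $2$-adic valuation exactly $1$ so that the resulting $u,v$ are odd. The only cosmetic difference is organizational---the paper partitions $I_{q}$ into the classes $I_{u_{l},v_{h}}=\{(a,b)\in I_{q}:\gcd(a,b)=2u_{l}v_{h}\}$ and collapses each class at once, whereas you argue the two inclusions directly and make the $2$-adic count explicit.
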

\begin{proof}
Without loss of generality, suppose $r\equiv 1 (mod\;4)$. Then we have
$$I_{q}=\{(a,b): a,b|(q-1), a,b \textnormal{ even}, a\equiv 2 (mod\;4), 2a|b(r+1), 2b|a(r-1)\}.$$

Since $a\equiv 2(mod\;4)$, for any $1\leq l\leq i$ and $1\leq h\leq j$,  we can define $I_{u_{l},v_{h}}=\{(a,b):\gcd(a,b)=2u_{l}v_{h},\,(a,b)\in I_{q}\}$. Then it is easy to find that $(u_{l}(r+1),v_{h}(r-1))\in I_{u_{l},v_{h}} $. Note that $\{u_{l}v_{h}\}_{ 1\leq l\leq i,\,1\leq h\leq j}$ is the set of all the odd divisors of $q-1$,  hence, we have $\bigcup_{1\leq l\leq i}\bigcup_{1\leq h\leq j}I_{u_{l},v_{h}}=I_{q}$. By Proposition 1, for any $(a,b)\in I_{u_{l},v_{h}}$, we have
$L_{(a,b)} \subseteq L_{(u_{l}(r+1),v_{h}(r-1))}.$
So $$\bigcup_{(a,b)\in I_{u_{l},v_{h}}} L_{(a,b)}=   L_{(u_{l}(r+1),v_{h}(r-1))}.$$
Therefore
\begin{equation*}
\begin{aligned}
\bigcup_{(a,b)\in I_{q}}L_{(a,b)}=&\bigcup_{1\leq l\leq i}\bigcup_{1\leq h\leq j}\bigcup_{(a,b)\in I_{u_{l},v_{h}}} L_{(a,b)}\\
=&\bigcup_{1\leq l\leq i}\bigcup_{1\leq h\leq j}L_{(u_{l}(r+1),v_{h}(r-1))}\\
=&\bigcup_{(a,b)\in U\times V}L_{(a,b)}.
\end{aligned}
\end{equation*}
The conclusion follows.
\end{proof}
\small
\begin{table}
\centering
\caption{ \small Some known results about MDS  self-dual codes with length $n$}
\begin{tabular}{|c|p{5.5cm}<{\centering}|c|}
  \hline
  $q$ & $n$ & References \\ \hline \hline
  $q$ even & $n\leq q$ & \cite{8} \\ \hline
  $q$ odd & $n=q+1$ & \cite{8},\cite{9} \\ \hline
  $q=r^{2}$ & $n\leq r$ & \cite{9} \\ \hline
  $q=r^{2}$,\;$r\equiv3(mod\;4)$ & $n=2tr$, $t\leq \frac{r-1}{2}$ & \cite{9} \\ \hline
  $q\equiv1(mod\;4)$ & $4^{n}n^{2}\leq q$ & \cite{9} \\ \hline
  $q\equiv3(mod\;4)$ & $n\equiv 0 (mod\;4)$ and $(n-1)\mid (q-1)$ & \cite{18} \\ \hline
  $q\equiv1(mod\;4)$ &  $(n-1)\mid (q-1)$ & \cite{18} \\ \hline
  $q=p^{m}\equiv1(mod\;4)$ & $n=p^{l}+1$, $l\leq m$ & \cite{11} \\ \hline
  $q=r^{s}$, $r$ odd, $s$ even & $n=2tr^{l}$, $0 \leq l\leq s$ and $ 1 \leq t \leq\frac{r-1}{2}$ & \cite{11} \\ \hline
  $q=r^{s}$, $r$ odd, $s$ even & $n=(2t+1)r^{l}+1$, $0 \leq l\leq s$ and $0 \leq t \leq\frac{r-1}{2}$ & \cite{11} \\ \hline
  $q$ odd & $(n-2) \mid (q-1)$, $\eta(2-n)=1$ & \cite{10},\cite{11} \\ \hline
  $q$ odd & $(n-1) \mid (q-1)$, $\eta(1-n)=1$ & \cite{10} \\ \hline
  $q\equiv1(mod\;4)$ & $n \mid (q-1)$ & \cite{10} \\ \hline
  $q=p^{m}$, $p$ odd & $n=p^{l}+1$, $l|m$ & \cite{10} \\ \hline
  $q=p^{m}$, $p$ odd & $n=2p^{l}$, $l<m$, $\eta(-1)=1$ & \cite{10} \\ \hline
  $q=r^{s}$, $r$ odd, $s\geq2$ & $n=tr$, $t$ even and $2t \mid (r-1)$ & \cite{10} \\ \hline
  $q=r^{s}$, $r$ odd, $s\geq2$ & $n=tr$, $t$ even, $(t-1) \mid (r-1)$ and $ \eta(1-t)=1$ & \cite{10} \\ \hline
  $q=r^{s}$, $r$ odd, $s\geq2$ & $n=tr+1$, $t$ odd, $ t \mid (r-1)$ and $\eta(t)=1$& \cite{10} \\ \hline
  $q=r^{s}$, $r$ odd, $s\geq2 $& $n=tr+1$, $t$ odd, $(t-1) \mid (r-1)$ and $\eta(t-1)=\eta(-1)=1$ & \cite{10} \\ \hline
  $q=r^{2}$, $r$ odd & $n=tr$, $t$ even, $1\leq t\leq r$ & \cite{10} \\ \hline
  $q=r^{2}$, $r$ odd & $n=tr+1$, $t$ odd, $1\leq t\leq r$ & \cite{10} \\ \hline
  $q=r^{2}$, $r$ odd & $n=tm$, $\frac{q-1}{m}$ even and $1 \leq t \leq \frac{r+1}{\gcd(r+1,m)}$ & \cite{13} \\ \hline
  $q=r^{2}$, $r$ odd & $n=tm+1$, $tm$ odd, $m \mid (q-1)$ and $ 2 \leq t \leq \frac{r+1}{2\gcd(r+1,m)}$ & \cite{13} \\ \hline
  $q=r^{2}$, $r$ odd & $n=tm+2$, $tm$ even, $m \mid (q-1)$ (except $t$,$m$ are even and $r\equiv1(mod\;4)$), and $1 \leq t \leq \frac{r+1}{\gcd(r+1,m)}$ & \cite{13} \\ \hline
  $q=r^{2}$, $r$ odd & $n=tm$, $\frac{q-1}{m}$ even,  $1 \leq t \leq \frac{s(r-1)}{\gcd(s(r-1),m)}$, $s$ even, $s \mid m$ and $\frac{r+1}{s}$ even & \cite{13} \\ \hline
  $q=r^{2}$, $r$ odd & $n=tm+2$, $\frac{q-1}{m}$ even, $\frac{r+1}{s}$ even, $1 \leq t \leq \frac{s(r-1)}{\gcd(s(r-1), m)}$, $s$ even and $s \mid m $ & \cite{13} \\ \hline
  $q=r^{2}$, $r$ odd & $n=tm$, $\frac{q-1}{m}$ even,  $1 \leq t \leq \frac{r-1}{\gcd(r-1,m)}$ & \cite{14} \\ \hline
  $q=r^{2}$, $r$ odd & $n=tm+1$, $tm$ odd, $m \mid (q-1)$,  $2 \leq t \leq \frac{r-1}{\gcd(r-1,m)} $& \cite{14} \\ \hline
  $q=r^{2}$, $r$ odd & $n=tm+2$, $tm$ even, $m \mid (q-1)$, $2 \leq t \leq \frac{r-1}{\gcd(r-1,m)}$ & \cite{14} \\ \hline
  $q=r^{2}$, $r\equiv1(mod\;4)$ & $n=s(r-1)+t(r+1)$, $s$ even, $1 \leq s \leq \frac{r+1}{2}$ and $1 \leq t \leq \frac{r-1}{2}$  & \cite{16} \\ \hline
  $q=r^{2}$, $r\equiv3(mod\;4)$ & $n=s(r-1)+t(r+1)$, $s$ odd, $1 \leq s \leq \frac{r+1}{2}$ and $1 \leq t \leq \frac{r-1}{2}$ & \cite{16} \\ \hline
\end{tabular}
\end{table}
\normalsize
\section{Comparison and Examples}
In this Section, we make a  comparison of our results with the previous results, and give some examples. Firstly, we  summarize some known results about the constructions of MDS self-dual codes in  the following Table 1. Secondly, we give two remarks and show the comparison of our results by Table 2. As a note, we can  list the lengths of $q$-ary MDS self-dual codes produced by our constructions through the enumeration by computer.
\begin{remark}
The constructions in \cite[Theorem 1]{16} are our special case when we choose $(a,b)=(r+1,r-1)$ in Theorems 1(i), 2(i).
\end{remark}
\begin{remark}
Choose $(a,b,s,t)=(r+1,r-1,\frac{r+1}{2},\frac{r-1}{2})$, then there exists a  $q$-ary MDS self-dual code of the largest length $q+1$ by Theorem 1(ii) or 2(ii).
\end{remark}
\begin{center}
T\footnotesize ABLE \normalsize 2. Proportion of number of possible lengths to $\frac{q}{2}$\\\
($N$ is the number of possible lengths)
\begin{tabular}{|c|c|p{3cm}<{\centering}|p{2cm}<{\centering}|p{1.5cm}<{\centering}|p{2cm}<{\centering}|}
  \hline
  $r$ & $q$ &$N/(\frac{q}{2}$) of Table 1 (except \cite{16})& $N/(\frac{q}{2})$ of \cite{16} \  & $N/\frac{q}{2}$ of us &number of new lengths\\ \hline \hline
  149 & 22201&$11.89\%$ &$25\%$ & $38.61\%$ &775 \\ \hline
 151 & 22801&$13.16\%$ &$25\%$ &$34.95\%$ & 676\\ \hline
 157 & 24649&$10.18\%$  &$25\%$ &$34.95\%$ &758\\ \hline
 163 & 26569&$10.67\%$  &$25\%$ &$34.28\%$ &828\\ \hline
 167 & 27889&$13.90\%$  &$25\%$ &$34.27\%$ &704\\ \hline
\end{tabular}
\end{center}

According to Table 1, the results (except \cite{16}) can just obtain a few $q$-ary MDS self-dual codes and the constructions in \cite{16} can contribute $25\%$  to all possible $q$-ary MDS self-dual codes. From Remark 1, our constructions contain the constructions in \cite{16}, thus $N/\frac{q}{2}$ of us must be more than $25\%$. Actually, after a lot of experiments for different $q$, we find that $N/\frac{q}{2}$ of us is generally more than 34\%, which is the largest as far as we know. Moveover, the last column in Table 2 shows the number of new $q$-ary MDS self-dual codes compared with the previous results.

Finally, we present some new MDS self-dual codes with explicit parameters as follows.
\begin{example}
Let $q=149^{2}$. Choose $(a,b,s_{1},t_{1})=(150,444,24,74)$ and $(a,b,s_{2},t_{2})=(150,444,25,74)$, by Theorem 1, we can obtain two $149^{2}$-ary MDS self-dual codes of lengths $n_{1}=7252$ and $n_{2}+2=7402$. Compared with the parameters of MDS self-dual codes given in Table 1, the parameters are new.
\end{example}
\begin{example}
Let $q=151^{2}$. Choose $(a,b,s_{1},t_{1})=(456,150,76,23)$, $(a,b,s_{2},t_{2})=(456,150,75,23)$, by Theorem 2, we can obtain two $151^{2}$-ary MDS self-dual codes of lengths $n_{1}=7296$ and $n_{2}+2=7248$. Compared with the parameters of MDS self-dual codes given in Table 1, the parameters are new.
\end{example}
\section{Constructions of MDS self-orthogonal Codes and MDS almost self-dual Codes}
In this section, we give new families of MDS self-orthogonal codes and MDS almost self-dual codes by the same evaluation sets in Section 3.


\begin{theorem}
Let $q = r^{2}$. Suppose $a$ and $b$ are even divisors of $q-1 $ satisfying $2a|b(r+1)$ and $2b|a(r-1)$. Let $n=s\frac{q-1}{a}+t\frac{q-1}{b}$, where $1 \leq s \leq \frac{a}{\gcd(a,b)}$ and $1 \leq t \leq \frac{b}{\gcd(a,b)}$. For $1\leq k\leq \frac{n}{2}-1$,
\begin{description}
\item[\textnormal{(i)}] if $r\equiv 1 (mod\;4)$ and $a\equiv 2 (mod\;4)$,  then there exists a $q$-ary $[n,k]$-MDS self-orthogonal code.
\item[\textnormal{(ii)}] if $r\equiv 3 (mod\;4)$ and $b\equiv 2 (mod\;4)$,  then there exists a $q$-ary $[n,k]$-MDS self-orthogonal code.
\end{description}

\end{theorem}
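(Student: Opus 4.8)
The plan is to recycle the evaluation sets $S$ and $T$ built in the proofs of Theorem 1 and Theorem 2 and to feed them into the self-orthogonality criterion of Lemma 4. The first observation is that in the stated range $1\le k\le \frac n2-1$ one has $n-2k\ge 2$ (as $n$ is even), so \emph{any} polynomial $\omega(x)$ of degree at most $1$ is admissible in Lemma 4; the only things left to check for such an $\omega$ are that $\eta\big(\omega(a_i)\delta_A(a_i)\big)=1$ for every $a_i$ in the evaluation set and that $\omega(a_i)\ne 0$. The second observation is that the computations of $\eta(\delta_S(\cdot))$ in the proof of Theorem 1 (and of $\eta(\delta_T(\cdot))$ in the proof of Theorem 2) use only $a,b$ even, $a\equiv 2\,(\mathrm{mod}\;4)$ (resp.\ $b\equiv 2\,(\mathrm{mod}\;4)$), $r\equiv 1\,(\mathrm{mod}\;4)$ (resp.\ $r\equiv 3\,(\mathrm{mod}\;4)$), $2a\mid b(r+1)$ and $2b\mid a(r-1)$; the parity of $s$ (resp.\ of $\frac{(r+1)b}{2a}s^2$) was invoked only in the final sign step. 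Hence all of those $\eta$-values are available here unchanged, and $|S|=|T|=n$ is again even.

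For part (i), take $A=S$ as in the proof of Theorem 1. There it was shown that $\eta(\delta_S(\beta^i\alpha^j))=\eta(\theta^{\frac{r+1}{2}(s-1)})$ on the cosets $\beta^iA$ and $\eta(\delta_S(\gamma^{2i+1}\beta^j))=\eta(\gamma^{-(2i+1)})=-1$ on the cosets $\gamma^{2i+1}B$, the last equality because $\frac a2$ is odd. If $s$ is even then $\frac{r+1}{2}(s-1)$ is odd, so $\eta(\delta_S(a_i))=-1$ for every $a_i\in S$; one takes $\omega(x)=c$ with $c\in\mathbb{F}_q^\ast\setminus QR_q$, giving $\eta(\omega(a_i)\delta_S(a_i))=1$. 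If $s$ is odd then $\eta(\delta_S(\beta^i\alpha^j))=1$ while $\eta(\delta_S(\gamma^{2i+1}\beta^j))=-1$; one takes $\omega(x)=x$, since $\eta(\beta^i\alpha^j)=1$ (an even power of $\theta$) while $\eta(\gamma^{2i+1}\beta^j)=\eta(\gamma^{2i+1})=-1$. In either case $\deg\omega\le 1\le n-2k$ and $\omega$ has no root in $S$ because $0\notin S$, so Lemma 4 produces a $q$-ary $[n,k]$-MDS self-orthogonal code.

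Part (ii) is entirely parallel with $A=T$ as in the proof of Theorem 2, which gives $\eta(\delta_T(\alpha^i\beta^j))=\eta(\theta^{-\frac{r+1}{2a}bs^2})$ on the cosets $\alpha^iB$ and $\eta(\delta_T(\xi^{2i+1}\alpha^j))=\eta(\xi^{-(2i+1)})=-1$ on the cosets $\xi^{2i+1}A$ (here $\frac b2$ is odd). If $\frac{(r+1)b}{2a}s^2$ is odd these are all $-1$ and one picks $\omega(x)$ a non-square constant; if it is even then $\eta(\delta_T(\alpha^i\beta^j))=1$ and $\eta(\delta_T(\xi^{2i+1}\alpha^j))=-1$, and one picks $\omega(x)=x$, using $\eta(\alpha^i\beta^j)=1$ and $\eta(\xi^{2i+1}\alpha^j)=\eta(\xi^{2i+1})=-1$. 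Again $\deg\omega\le 1\le n-2k$, $0\notin T$, and Lemma 4 applies.

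The routine ingredient is the transcription of the sign computations from the two earlier proofs. The one point that needs care is the parity split: in each sub-case a \emph{single} polynomial of degree at most $1$ must make $\omega(a_i)\delta_A(a_i)$ a square simultaneously on both families of cosets, and this is precisely why the statement restricts to $k\le \frac n2-1$ (so that linear $\omega$, not merely constants, are permitted). Beyond this I expect no obstacle.
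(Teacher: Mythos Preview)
Your proposal is correct and follows essentially the same route as the paper: reuse the evaluation sets $S$ and $T$ and the $\eta$-computations from Theorems~1 and~2, then split on the parity of $s$ (resp.\ of $\frac{(r+1)b}{2a}s^{2}$) choosing $\omega(x)=x$ in one case and a non-square constant in the other, and apply Lemma~4. The paper even makes the same specific choices, taking the non-square constant to be $\theta^{a/2}$ in~(i) and $\theta^{b/2}$ in~(ii).
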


\begin{proof}
(i) Let $S$ be defined as (\ref{1}). Then we can obtain (\ref{2}) and (\ref{3}). When $s$ is even, choose $w(x)=\theta^{\frac{a}{2}}$. Then for $0 \leq i \leq s-1$ and $ 1 \leq j \leq \frac{q-1}{a} $,
$$\eta(w(\beta^{i}\alpha^{j})\delta_{S}(\beta^{i}\alpha^{j}))=\eta(\theta^{\frac{a}{2}+\frac{r+1}{2}(s-1)})=1,$$
for $0 \leq i \leq t-1$ and $ 1 \leq j \leq \frac{q-1}{b}$,
$$\eta(w(\gamma^{2i+1}\beta^{j}) \delta_{S}(\gamma^{2i+1}\beta^{j}))=\eta(\theta^{\frac{a}{2}}\cdot\gamma^{-(2i+1)})=1.$$
When $s$ is odd, choose $w(x)=x$. Then  for $0 \leq i \leq s-1$ and $ 1 \leq j \leq \frac{q-1}{a} $,
$$\eta(w(\beta^{i}\alpha^{j})\delta_{S}(\beta^{i}\alpha^{j}))=\eta(\theta^{\frac{r+1}{2}(s-1)})=1,$$
for $0 \leq i \leq t-1$ and $ 1 \leq j \leq \frac{q-1}{b}$,
$$\eta(w(\gamma^{2i+1}\beta^{j})\delta_{S}(\gamma^{2i+1}\beta^{j}))=\eta(\gamma^{(2i+1)(1-1)})=1.$$
Therefore, there exists a nonzero polynomial $w(x)$ such that $\eta(w(e)\delta_{S}(e))=1$ for all $e \in S$. For $1\leq k\leq \frac{n}{2}-1$, there exists a $q$-ary $[n,k]$-MDS  self-orthogonal code by Lemma 4.

(ii) Let $T$ be defined as (\ref{4}). Then we can obtain (\ref{5}) and (\ref{6}).
When $\frac{r+1}{2a}bs^{2}$ is even, choose $w(x)=x$, then  for $ 0 \leq i \leq t-1$ and $ 1 \leq j \leq \frac{q-1}{b}$,
$$\eta(w(\alpha^{i}\beta^{j})\delta_{T}(\alpha^{i}\beta^{j}))=\eta(\theta^{-\frac{r+1}{2a}bs^{2}})=1,$$
for $0 \leq i \leq s-1$ and $1\leq j \leq \frac{q-1}{a}$,
$$\eta(w(\xi^{2i+1}\alpha^{j})\delta_{T}(\xi^{2i+1}\alpha^{j}))=\eta(\xi^{(2i+1)(1-1)})=1.$$
When $\frac{r+1}{2a}bs^{2}$ is odd, choose $w(x)=\theta^{\frac{b}{2}}$, then  for $ 0 \leq i \leq t-1$ and $ 1 \leq j \leq \frac{q-1}{b}$,
$$\eta(w(\alpha^{i}\beta^{j})\delta_{T}(\alpha^{i}\beta^{j}))=\eta(\theta^{\frac{b}{2}-\frac{r+1}{2a}bs^{2}})=1,$$
for $0 \leq i \leq s-1$ and $1\leq j \leq \frac{q-1}{a}$,
$$\eta(w(\xi^{2i+1}\alpha^{j})\delta_{T}(\xi^{2i+1}\alpha^{j}))=\eta(\theta^{\frac{b}{2}}\cdot\xi^{-(2i+1)})=1.$$
Therefore, there exists a nonzero polynomial $w(x)$ such that $\eta(w(e)\delta_{T}(e))=1$ for all $e \in T$. For $1\leq k\leq \frac{n}{2}-1$, there exists a $q$-ary $[n,k]$-MDS  self-orthogonal code by Lemma 4.
\end{proof}


\begin{theorem}
Let $q = r^{2}$. Suppose $a$ and $b$ are even divisors of $q-1$ satisfying $2a|b(r+1)$ and $2b|a(r-1)$. Let $n=s\frac{q-1}{a}+t\frac{q-1}{b}$, where $1 \leq s \leq \frac{a}{\gcd(a,b)}$ and $1 \leq t \leq \frac{b}{\gcd(a,b)}$,
\begin{description}
\item[\textnormal{(i)}] if  $r\equiv 1(mod \; 4)$, $a\equiv 2 (mod\;4)$ and $s$ is odd, then there exists a $q$-ary MDS almost self-dual code of length $n+1$.
\item[\textnormal{(ii)}] if  $r\equiv 3(mod \; 4)$, $b\equiv 2 (mod\;4)$ and $\frac{r+1}{2a}bs^{2}$ is even, then there exists a $q$-ary MDS almost self-dual code of length $n+1$.
\end{description}
\end{theorem}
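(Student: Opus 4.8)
The plan is to obtain Theorem 4 as an almost immediate consequence of the work already done for Theorems 1--3, by feeding the evaluation sets $S$ and $T$ into the extended-GRS self-orthogonality criterion of Lemma 5 with dimension $k=\frac{n}{2}$. The key structural remark is that a self-orthogonal $[n+1,\frac{n}{2}]$ code over $\mathbb{F}_q$ is automatically an MDS almost self-dual code: being an extended GRS code it is MDS, and its dual has dimension $(n+1)-\frac{n}{2}=\frac{n}{2}+1$, exactly one more than the code itself. Here $\frac{n}{2}$ is an integer because $n$ is even in these constructions, as checked inside the proofs of Theorems 1 and 2; moreover $k=\frac{n}{2}=\lfloor\frac{n+1}{2}\rfloor$ lies in the range allowed by Lemma 5. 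So it is enough to produce a self-orthogonal $GRS_{n/2}(S\cup\infty,\bm{v})$ for part (i) and a self-orthogonal $GRS_{n/2}(T\cup\infty,\bm{v})$ for part (ii).

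First I would specialize Lemma 5 to $k=\frac{n}{2}$. Then $n-2k+1=1$, so the lemma requires a nonzero polynomial of the form $\omega(x)=-x+\omega_0$ with $\eta(\omega(e)\delta_S(e))=1$ for every $e\in S$. Since $S$ is a disjoint union of cosets of multiplicative subgroups of $\mathbb{F}_q^{\ast}$, we have $0\notin S$, so the natural candidate $\omega(x)=-x$ never vanishes on $S$ and already has exactly the shape Lemma 5 prescribes, its top-degree coefficient being forced to equal $-1$.

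It then remains to verify $\eta(-e\,\delta_S(e))=1$ for all $e\in S$ in case (i), and the analogous statement for $T$ in case (ii). For (i), under the hypotheses $r\equiv 1\pmod{4}$, $a\equiv 2\pmod{4}$ and $s$ odd, the proof of Theorem 3(i) in the subcase $s$ odd (where the choice $w(x)=x$ is made) already establishes $\eta(e\,\delta_S(e))=1$ for every $e\in S$. Since $\eta(-1)=1$ (as $q=r^{2}$), it follows that $\eta(-e\,\delta_S(e))=\eta(-1)\,\eta(e\,\delta_S(e))=1$ on all of $S$, so $\omega(x)=-x$ satisfies the hypothesis of Lemma 5. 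Lemma 5 then yields the self-orthogonal code $GRS_{n/2}(S\cup\infty,\bm{v})$ of length $n+1$, which by the remark above is the desired MDS almost self-dual code. Part (ii) is word-for-word the same with $T$ in place of $S$: under $r\equiv 3\pmod{4}$, $b\equiv 2\pmod{4}$ and $\frac{r+1}{2a}bs^{2}$ even, the subcase $\frac{r+1}{2a}bs^{2}$ even of the proof of Theorem 3(ii) (again the choice $w(x)=x$) gives $\eta(e\,\delta_T(e))=1$ on $T$, and the same argument with $\omega(x)=-x$ and $k=\frac{n}{2}$ finishes the proof.

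I do not expect a genuine obstacle; the only point needing a moment's care is precisely that, after setting $k=\frac{n}{2}$, the polynomial in Lemma 5 is constrained to have leading coefficient $-1$, so the constant choice $w(x)=\theta^{a/2}$ available in the even-$s$ case of Theorem 3 is not admissible here and one must fall back on the linear polynomial $w(x)=x$, twisted by the sign $-1$. This is exactly why the hypotheses of Theorem 4 retain only the parameter regimes $s$ odd in (i) and $\frac{r+1}{2a}bs^{2}$ even in (ii), namely those in which the linear polynomial already worked for self-orthogonality in Theorem 3.
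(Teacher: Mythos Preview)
Your proposal is correct and follows essentially the same route as the paper: both choose $\omega(x)=-x$, invoke the quadratic-character computations already carried out (the paper cites equations (2), (3), (5), (6) directly while you route through the $w(x)=x$ subcases of Theorem~3, which is the same content), and apply Lemma~5 with $k=\frac{n}{2}$ to produce an $[n+1,\frac{n}{2}]$ MDS self-orthogonal code, which is then identified as almost self-dual. Your write-up is in fact more explicit than the paper's about why $n$ is even, why $k=\frac{n}{2}$ lies in the admissible range, why $0\notin S$, and why the parity constraints on $s$ (resp.\ $\frac{(r+1)b}{2a}s^2$) are forced by the shape of $\omega$ in Lemma~5.
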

\begin{proof}
(i)  Let $S$ be defined as (\ref{1}). Then we can obtain (\ref{2}) and (\ref{3}). Choose $w(x)=-x$. Since $s$ is odd, then for $0 \leq i \leq s-1$ and $ 1 \leq j \leq \frac{q-1}{a} $,
$$\eta(w(\beta^{i}\alpha^{j})\delta_{S}(\beta^{i}\alpha^{j}))=1.$$ For $0 \leq i \leq t-1$ and $ 1 \leq j \leq \frac{q-1}{b}$,
$$\eta(w(\gamma^{2i+1}\beta^{j}) \delta_{S}(\gamma^{2i+1}\beta^{j}))=1.$$ By Lemma 5, there exists a $q$-ary  $[n+1,\frac{n}{2}]$-MDS self-orthogonal codes, i.e., there exists a $q$-ary MDS almost self-dual code of length $n+1$.

(ii) Let $T$ be defined as (\ref{4}). Then we can obtain (\ref{5}) and (\ref{6}). Choose $w(x)=-x$. Similarly, since $\frac{r+1}{2a}bs^{2}$ is even, we can prove that  $\eta(w(e)\delta_{T}(e))=1$ for all $e \in T$.
By Lemma 5, there exists a $q$-ary $[n+1,\frac{n}{2}]$-MDS self-orthogonal codes, i.e., there exists a $q$-ary MDS  almost self-dual code of length $n+1$.
\end{proof}

\section{Conclusions}
In this paper, we investigate the construction of $q$-ary MDS self-dual codes for square $q$ via GRS codes and their extended codes. Refer to the previous results, the proportion of  $q$-ary MDS self-dual codes is small and the proportion was increased to 25\% by \cite{16}. However, we give four new families of $q$-ary MDS self-dual codes, and the proportion is generally more than 34\%, which is the largest  as far as we know. The future work is to find the construction of MDS self-dual codes which can take up a larger proportion, and even to completely solve the problem of the construction of MDS self-dual codes for all possible lengths. Moreover, two new families of MDS self-orthogonal codes and two new families of MDS almost self-dual codes are given by the same evaluation sets.
\section*{Acknowledgments}
The authors  are grateful to the Reviewer for the helpful suggestions on our manuscript that improved the paper greatly.

The research of Z. Huang and F.-W. Fu  is supported in part by the National Key Research and Development Program of China (Grant No. 2018YFA0704703), the National Natural Science Foundation of China (Grant No. 61971243), the Natural Science Foundation of Tianjin (20JCZDJC00610), the Fundamental Research Funds for the Central Universities of China (Nankai University).

\medskip
Received xxxx 20xx; revised xxxx 20xx.
\medskip

\end{document}